\documentclass[twocolumn,
pra,aps,superscriptaddress,showpacs,amsmath,amssymb]{revtex4-2}

\usepackage{graphicx}
\usepackage{dcolumn}
\usepackage{bm}

\usepackage[utf8]{inputenc}
\usepackage[english]{babel}
\usepackage[T1]{fontenc}
\usepackage{amsmath,amssymb,bm,amsthm,amsfonts}

\usepackage{hyperref}

\usepackage{tikz}
\usepackage{lipsum}
\usepackage{tabularx}
\usepackage{subfig}
\usepackage{graphicx}
\usepackage{floatrow}
\usepackage{quantikz}
\usepackage{natbib}
\usepackage{ragged2e}

\usepackage[toc,page]{appendix}

\usepackage[normalem]{ulem}

\usepackage[
ruled, 
lined, 
linesnumbered, 
commentsnumbered, 
longend]{algorithm2e}

\usepackage{xcolor}

\SetCommentSty{mycommfont}

\newtheorem{theorem}{Theorem}
\newtheorem{lemma}{Lemma}

\newtheorem{definition}{Definition}

\usepackage{mathtools}

\usepackage{color}

\begin{document}

\preprint{APS/123-QED}

\title{Universal Analog Quantum Simulation}

 \author{Yiming Huang}
 \affiliation{Center on Frontiers of Computing Studies, Peking University, Beijing 100871, China}
 \affiliation{Institute of High Energy Physics, Chinese Academy of Sciences, Beijing 100049, China}
\affiliation{China Center of Advanced Science and Technology, Beijing 100190, China}

 \author{Jiaxing Song}
 \affiliation{School of Computer Science, Peking University, Beijing 100871, China}
\affiliation{Center on Frontiers of Computing Studies, Peking University, Beijing 100871, China}
\author{Xiaoxia Cai}
\email{xxcai@ihep.ac.cn}
\affiliation{Institute of High Energy Physics, Chinese Academy of Sciences, Beijing 100049, China}
\author{Xiao Yuan}
\email{xiaoyuan@pku.edu.cn}
\affiliation{Center on Frontiers of Computing Studies, Peking University, Beijing 100871, China}
\affiliation{School of Computer Science, Peking University, Beijing 100871, China}

\begin{abstract}

Analog quantum simulators emulate complex many-body dynamics through native continuous-time evolution under hardware-defined interactions. Yet once a platform is specified, its interaction structure is largely fixed by the underlying hardware, restricting the Hamiltonians that can be realized and limiting programmability. Here we introduce universal analog quantum simulation (UAQS), a hybrid framework that systematically expands the range of accessible quantum evolutions within a given analog platform. UAQS employs optimized continuous-time control fields to engineer target dynamics directly, avoiding decomposition into discrete gate sequences. By preserving native analog evolution while extending the set of achievable Hamiltonians, UAQS transforms fixed-interaction analog devices into programmable simulators. Numerical studies on representative architectures, including superconducting circuits and Rydberg-atom arrays, show that UAQS accurately reproduces non-trivial many-body dynamics beyond the intrinsic interaction structure of the hardware. These results establish UAQS as a practical route toward programmable analog quantum simulation.
\end{abstract}
\maketitle

\section{Introduction}

Accurately predicting ground and excited states, as well as the dynamical behavior of quantum many-body systems, is central to a wide range of problems in condensed matter physics, quantum chemistry, and high-energy physics~\cite{feynman2018simulating,georgescu2014quantum,mcardle2020quantum,di2024quantum,miessen2023quantum}. However, classical numerical methods are fundamentally limited by the exponential growth of Hilbert space, the complex entanglement structure of generic many-body states, and the breakdown of perturbative descriptions in strongly correlated regimes, rendering many problems computationally intractable. These limitations motivate the development of quantum simulation as a means to access regimes beyond classical reach.

Among the available approaches, analog quantum simulation offers a direct and high-fidelity route to emulating many-body dynamics by exploiting the native continuous-time interactions of engineered quantum platforms such as cold atoms~\cite{gross2017quantum}, trapped ions~\cite{wu2016understanding,blatt2012quantum}, Rydberg arrays~\cite{browaeys2020many}, photonic~\cite{aspuru2012photonic}, superconducting circuits~\cite{daley2022practical,monroe2021programmable,houck2012chip,altman2021quantum}. By avoiding digital gate decomposition~\cite{lloyd1996universal,lanyon2011universal,kamakari2022digital,fischer2026enabling}, it reduces the running time and operational overhead. However, its scope is inherently constrained by the underlying hardware architecture. For a given hardware platform, the accessible interaction structure and control Hamiltonians are largely determined by the underlying microscopic implementation. Consequently, an analog device engineered for one interaction pattern generally lacks the flexibility to reproduce the dynamics of qualitatively different Hamiltonians which limits its programmability.

Recent efforts have sought to enhance flexibility through advanced control techniques and hybrid digital-analog protocols~\cite{eckardt2017colloquium,babukhin2020hybrid,gonzalez2021ditital,celeri2023digital,garcia2025hamiltonian,kumar2025digital,fauseweh2024quantum}. While these approaches expand the accessible dynamics to varying degrees, they do not provide a general and systematic framework for extending the dynamical and state-preparation capabilities of a fixed analog platform while preserving native continuous-time evolution. Consequently, the ability to transform a hardware-specific analog device into a broadly programmable dynamical resource remains limited.

In this work, we introduce universal analog quantum simulation (UAQS), a general framework that embeds optimization directly into native analog quantum control. UAQS formulates both real-time and imaginary-time evolution as continuous-time control optimization problems executed within the hardware itself. Instead of compiling target Hamiltonians into sequences of discrete gates, UAQS optimizes time-dependent control fields to steer the system toward desired quantum states or dynamical trajectories. By integrating the optimization paradigm of digital quantum algorithms with the intrinsic controllability of analog platforms, the framework enables systematic parametrized control of Hamiltonian dynamics while preserving continuous-time evolution.

UAQS extends the range of many-body dynamics and quantum states accessible to a fixed analog platform beyond its intrinsic interaction structure while preserving the hardware efficiency and low-depth advantages of analog simulation. Through general theoretical analysis and numerical studies on representative architectures, including superconducting circuits and Rydberg atom arrays, we demonstrate that UAQS enables accurate realization of nontrivial dynamics that would otherwise lie outside the native interaction pattern of the hardware. By bridging parametrized analog control and conventional variational principle, UAQS establishes a flexible and experimentally viable paradigm for programmable analog quantum simulation on near-term devices.

\begin{figure*}[ht]
\centering
  \includegraphics[width=\textwidth]{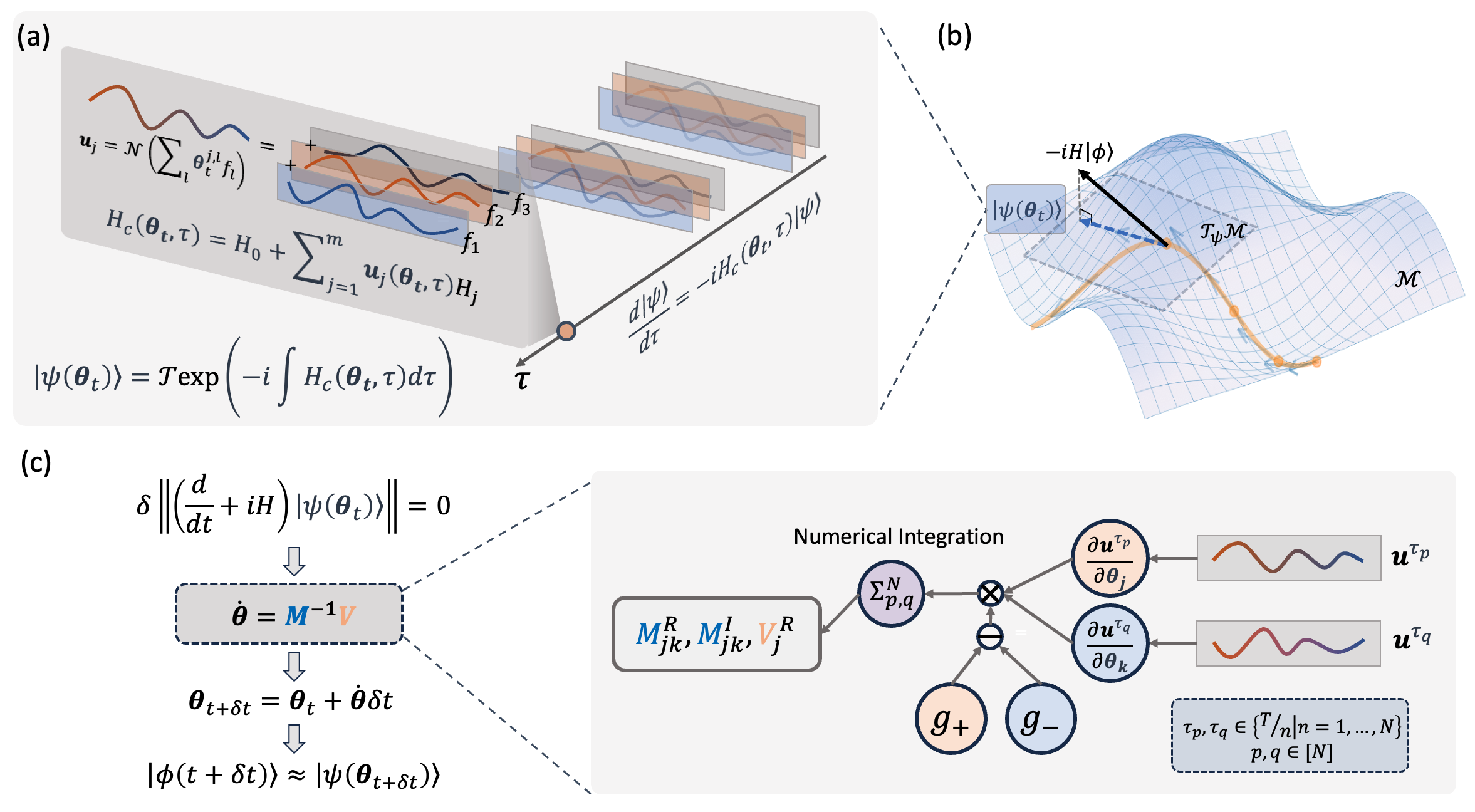}
  \caption{An Illustration of universal analog quantum simulation. (a) The construction of a parametrized analog Hamiltonian driven by a control Hamiltonian $H_c$ with programmable pulses $\bm{u}$. (b) The trial state traces a trajectory on a manifold $\mathcal{M}$ embedded in Hilbert space. The dynamics follow the projection of the exact time evolution $-iH|\phi\rangle$, presented by a black solid arrow, onto the tangent space $T_{\psi}\mathcal{M}$ at $|\psi(\bm{\theta}_t)\rangle$ as shown by a blue dashed arrow. (c) The left panel shows the workflow for deriving the evolution equation of the control parameters $\bm{\theta}$ based on variational principles, while the right panel presents a general framework for estimating the elements of the the associated matrices $M$ and $V$.
}
  \label{fig:framework}
\end{figure*}
\section{Results} 
This section introduces the theoretical ingredients underlying UAQS. We first describe the parametrized analog Hamiltonian that characterizes the dynamical capabilities of an analog platform, and then discuss the UAQS framework for simulating the imaginary-time and real-time evolution.

\subsection{Preliminaries}
A fundamental challenge in quantum simulation arises from the structural mismatch between the dynamics one seeks to emulate and the physical interactions natively available on a given hardware platform. Conventional digital approaches resolve this tension by compiling target Hamiltonians into discrete sequences of elementary quantum gates~\cite{li2017efficient,cirstoiu2020variational,yao2021adaptive,heya2023subspace}. 
While versatile, this compilation paradigm inevitably introduces overhead, accumulates approximation errors through Trotterization, and a disconnect from the continuous-time physics that analog platforms are naturally suited to capture. 

In conventional analog simulation, the control Hamiltonian $H_c$ is typically required to coincide with the target Hamiltonian $H$, which significantly limits the range of problems that can be addressed. To enable the simulation of more general target Hamiltonians $H$, it is necessary to introduce additional flexibility into $H_c$. In this work, we therefore consider a parametrized form of the control Hamiltonian $H_c$ given by

\begin{equation}\label{Eq:analogH}
H_c(\bm{\theta},\tau)=H_0+\sum_{j=1}^m u_j(\bm{\theta}_j,\tau)H_j,
\end{equation}
where $H_0$ denotes a fixed system Hamiltonian and $\{H_j\}$ are control Hamiltonians driven by time-dependent pulses. Each control pulse is described by a shape function $u_j(\bm{\theta}_j,\tau)$ and parametrized as
\begin{equation}
u_j(\bm{\theta}_j,\tau)=\mathcal{N}\left(\sum_{l=1}^d \theta_{jl} f_l(\tau)\right),
\end{equation}
with $\{f_l\}_{l=1}^d$ denoting a set of basis functions and $\mathcal{N}$ a normalization function. Each control Hamiltonian $H_j$ is thus independently governed by its own set of tunable parameters $\bm{\theta}_j$. Because of physical constraints, pulse amplitudes must remain bounded, therefore it imposes the constraint by introducing a normalizer $\mathcal{N}$ ensuring each pulse is bounded by $\theta_{\max}$ which is chosen according to the target device.

The unitary evolution of the analog control Hamiltonian given is by $U(\bm{\theta})=\mathcal{T}\exp(-i\int_0^T H_c(\bm{\theta},\tau)d\tau)$, where $T$ is the total evolution time. 
Consequently, it is fully specified by the system and control Hamiltonians $H_0$ and $\{H_j\}$, the chosen basis functions $\{f_l\}$, the number of tunable parameters per control term $d$, and the total control strength $\theta_{\max} T$. 
The expressive power of the analog process is governed by both the maximum pulse amplitude $\theta_{\max}$ and the evolution time $T$, which together define the control strength $\theta_{\max}T$. The specific setup for UAQS can be found in Sec.~\ref{sec:ansatzsetting}.

The control Hamiltonian in Eq.~\eqref{Eq:analogH} provides a unified framework for describing analog Hamiltonians with varying degrees of controllability. In the simplest case, when only $H_0$ is present, the system reduces to a conventional analog simulator with a fixed Hamiltonian. The inclusion of control pulses $u_j$ captures the enhanced controllability available in modern analog quantum platforms. These controls can, for instance, represent arbitrary single-qubit operations realizable in superconducting qubit systems~\cite{kjaergaard2019superconducting,mezzacapo2014digital,garcia2015fermion,asaad2016independent,weber2017coherent}, trapped-ion platforms~\cite{zhang2017observation,haffner2005scalable,blatt2012quantum}, and Rydberg atom arrays~\cite{saffman2016quantum}. While these controls extend the class of simulable Hamiltonians within the form of Eq.~\eqref{Eq:analogH}, it remains a significant challenge to simulate more general Hamiltonians that cannot be expressed in this form.
Therefore, we present our UAQS framework to address this important limitation in a different route. Rather than compiling dynamics into gates~\cite{li2017efficient,cirstoiu2020variational,yao2021adaptive,heya2023subspace}, we embed optimization directly into native analog quantum control which continuously shapes time-dependent control fields to steer the quantum system toward desired dynamical trajectories without ever leaving the analog domain.

This principle is formalized in our framework of UAQS, which recasts both real-time and imaginary-time quantum evolution as continuous-time control optimization problems. The central conceptual shift is a reframing of the simulation problem: instead of considering how the target evolution can be decomposed into available gate operations, UAQS explorers how to continuously tune the available analog controls so that the parametrized analog simulator approximates the desired quantum dynamics as faithfully as possible. This reformulation naturally unifies the systematic, parameter-driven optimization paradigm of digital quantum algorithms with the continuous-time evolution that is the defining physical strength of analog platforms.

\subsection{Framework of UAQS}

To simulate general Hamiltonians, we introduce UAQS, a framework for programmable analog quantum hardware in which the quantum state is prepared via a parameterized analog Hamiltonian instead of a discrete sequence of quantum gates. 

The central idea is to restrict the quantum dynamics to a low-dimensional manifold $\mathcal{M}$ in Hilbert space, parameterized by tunable control parameters $\bm{\theta}(t)=(\theta_1(t),\ldots,\theta_K(t))$. The state $|\psi(\bm{\theta}(t))\rangle$, which approximates the true evolution, is generated by continuous evolution under a parameterized analog Hamiltonian $H_c(\bm{\theta},\tau)$, reducing the simulation to solving for the time dependence of the $K$ classical parameters.
The governing principle of the framework is that we project the exact quantum dynamics $d|\phi(t)\rangle/dt$, governed by the Schrödinger equation, onto the tangent space $\mathcal{T}_\psi \mathcal{M}$ of the control parameter manifold at the current state $|\psi(\bm{\theta}(t))\rangle$. This projection leads the optimal parameter update $\dot{\bm{\theta}}$ such that the  state $|\psi(\bm{\theta}(t))\rangle$ tracks the true dynamics as faithfully as possible within the expressive capacity of the control Hamiltonian. 
Concretely, the tangent space is spanned by the partial derivatives $\{|\partial_{\theta_k} \psi(\bm{\theta}(t))\rangle\}$, and the variational principle determines the parameter dynamics by minimizing the $L^2$-norm of the residual between the projected and exact time derivatives in Hilbert space. This yields a system of equations that govern the flow of parameters on $\mathcal{M}$ as shown in Fig.~\ref{fig:framework}, which we detail below for both real-time and imaginary-time evolution.\\
\paragraph{Real-time evolution} 
The real-time dynamics of a closed quantum system are governed by the time-dependent Schrödinger equation
\begin{equation}
\frac{d|\phi(t)\rangle}{dt} = -i H |\phi(t)\rangle,
\end{equation}
where $H$ is the target Hamiltonian. On analog quantum devices, it is not available to access arbitrary gate sequences, instead, the hardware evolves under a native Hamiltonian whose structure is fixed by the physical platform.
In our framework, the state $|\psi(\bm{\theta}(t))\rangle$ evolves under the parametrized analog Hamiltonian $H(\bm{\theta}(t))$, and the parameters $\bm{\theta}$ are updated according to the variational principle. Through enforcing the residual error of Schrödinger equation be orthogonal to the tangent space of the manifold $\mathcal{T}_\psi \mathcal{M}$, which is also equivalent to minimize the least-square distance between the time derivative of parameters and the exact Schrodinger dynamics, one obtains a evolution of parameters, i.e. $M^R\dot{\bm{\theta}}=V^I$, where $M^R \equiv \mathrm{Re}[M]$ and $V^I \equiv \mathrm{Im}[V]$, respectively.
\begin{align}
M_{i,j}& = \frac{\partial\left<\psi(\bm{\theta}(t))\right|}{\partial\bm{\theta}_i}\frac{\partial\left|\psi(\bm{\theta}(t))\right>}{\partial\bm{\theta}_j},\\
V_i &= \frac{\partial\left\langle\psi(\bm{\theta}(t))\right|}{\partial\bm{\theta}_{i}}H\left|\psi(\bm{\theta}(t))\right\rangle.
\end{align}
At each time step, the parameter update $\dot{\bm{\theta}}$ is determined by solving this linear equation $M^R \dot{\bm{\theta}} = V^I$. The updated parameters then define the next state, and such procedure is iterated to approximate the full real-time trajectory.
\paragraph{Imaginary-time evolution}
Imaginary-time evolution is a powerful technique for preparing ground states of quantum Hamiltonians. Formally, it is obtained by the replacement $t \to i\tau$ in the Schrödinger equation, yielding
\begin{equation}
\frac{d|\phi(\tau)\rangle}{dt} = -(H-E_\tau) |\phi(\tau)\rangle,
\end{equation}
where $E_\tau=\langle H\rangle_{|\phi(\tau)\rangle}$. Under this flow, any initial state with nonzero overlap with the ground state converges exponentially to the ground state as $\tau \to \infty$.
Applying the same projection of the imaginary-time evolved state, it also yields a linear equation for the evolution of parameters, $M^R \dot{\bm{\theta}} = -V^R$ where $V^R \equiv \mathrm{Re}[V]$. The structure of the equation is identical to the real-time case, with the same coefficient matrix $M$ on the left-hand side, but with the imaginary and real parts of $V$ exchanged and a sign reversal. This parallel structure means that both real and imaginary-time simulations can be implemented within the same framework, requiring the same underlying measurements of $M$ and $V$.\\

The practical bottleneck of quantum simulation over parameter space lies in the efficient estimate the coefficient matrix $M$ and $V$. On digital quantum computers, the matrix elements of $M$ and $V$ is not hard to estimate, which can be obtained by performing a Hadamard test like circuit that extracts real and imaginary parts of arbitrary inner products by introducing a single ancilla qubit and a controlled unitary operation. On analog quantum hardware, however, this approach is not directly applicable, because these discrete, controlled-unitary operations that fundamentally conflict with the continuous, global evolution of analog systems. Besides, implementing such operations on analog hardware would require significant additional overhead, including auxiliary physical qubits, engineered long-range couplings to serve as control channels, and precise timing of conditional interactions, all of which are technically challenging without introducing significant noise or decoherence. Thus, estimating $M$ and $V$ on analog hardware is therefore a nontrivial problem that requires a fundamentally different approach. To address this, we introduce estimation methods that require no ancilla qubits and no controlled unitary operations, relying solely on ingredients that are native to programmable analog hardware: continuous Hamiltonian evolution, intermediate projective measurements in Pauli bases, and local observable readout.

Given a target Hamiltonian $H$ and the controllable Hamiltonian $H_c(\bm{\theta},\tau)=\sum_{j=1}^m \bm{u}_j(\bm{\theta},\tau)H_j$, where $\bm{u}_j(\bm{\theta},\tau)$ is tunable shape function. For simulating real-time evolution, we approximate the matrix elements of $M^R$ and $V^I$ using Monte Carlo integration combined with a direct measurement protocol that requires no ancillary qubits and is compatible with analog platforms. The derivations are deferred to Appendix~\ref{sec:proof}.
\begin{align}\label{eq:M_V_R}
M_{j,k}^R & \approx -\frac{T^2}{N^2}\sum_{p=1}^N\sum_{q=1}^N \frac{\partial{\bm{u}_j}(\tau_p)}{\partial{\theta_j}}\frac{\partial{\bm{u}_k}(\tau_q)}{\partial{\theta_k}}\left(P_{j,k}^{+} - P_{j,k}^{-}\right)\nonumber \\
V_{j}^I & \approx -\frac{T}{N}\sum_{p=1}^N \frac{\partial{u_j}(\tau_p)}{\partial{\theta_j}} \left(P_{j}^{+} - P_{j}^{-}\right)
\end{align}
where $P_{j,k}^{\pm} = p(M_{H_j}=\pm1)\langle H_k\rangle_{\pm}$ and $P_{j}^{\pm} = p(M_{H_j}=\pm1)\langle H\rangle_{\pm}$. As $P_{j,k}^{\pm}$ and $P_j^{\pm}$ share the same structure and estimation procedure, we focus on the estimation of $P_{j,k}^{\pm}$ without loss of generality. In Eq.~\eqref{eq:M_V_R}, $\tau_p$, $\tau_q$ are sampled time used for Monte Carlo integration, and $P_{j,k}^{\pm} = p(M_{H_j}=\pm1)\langle H_k\rangle_{\pm}$ where $p(M_{H_j}=\pm 1)=\|\frac{1}{2}(I\pm H_j)|\psi(\tau_p)\rangle\|^2$ refers to the probability of getting the result $\pm 1$ via performing projective measurement $\mathcal{M}_{H_j}$ on $|\psi(\tau_p)\rangle$. $\langle H_k\rangle_{\pm}$ is the expectation value of $H_k$ over the state $|\psi(\tau_q)^{\pm}\rangle=U_{\bm{\theta}}(\tau_q,\tau_p)|\psi_{\pm}\rangle$ where $U_{\bm{\theta}}(\tau_q,\tau_p)$ is analog quantum evolution from $\tau_p$ to $\tau_q$ and $|\psi_{\pm}\rangle$ is the post projected state of $\mathcal{M}_{H_j}$.
Concretely, we begin by preparing the initial quantum state $|\psi_0\rangle$ and specifying the total evolution time $T$. To estimate the elements of $M^R$ via Monte Carlo integration, it is also required independently and uniformly draw time-stamp pairs $\mathcal{S}=\{\tau_p, \tau_q|\tau_p,\tau_q \in [0,T], p,q=1,\cdots,N\}$.
First, we evolve the initial state $|\psi_0\rangle$ from $0$ to $\tau_p$ under the programmable Hamiltonian $H_c$, thereby generating the state $|\psi(\tau_p)\rangle$. At this time, we need perform a nondestructive projective measurement of $H_j$ on $|\psi(\tau_p)\rangle$ which produces outcomes $M_{H_j}=\pm 1$ with corresponding probabilities $p(M_{H_j}=\pm 1)$ and post measurement state $|\psi_{\pm}\rangle$. Then, the projected state $|\psi_{\pm}\rangle$ are further evolved from $\tau_p$ to $\tau_q$ by the control Hamiltonian $H_c$, resulting in states $|\psi(\tau_q)^{\pm}\rangle=U_{\bm{\theta}}(\tau_q,\tau_p)|\psi_{\pm}\rangle$. Subsequently, the expectation values of the Hamiltonian term $H_k$ are evaluated on the post-measurement states $|\psi(\tau_q)^{\pm}\rangle$ yielding $\langle H_k\rangle_{\pm}$. At the end, these quantities combined with $p(M_{H_j}=\pm1)$ forms $P_{j,k}^{\pm}=p(M_{H_j}=\pm1)\langle H_k\rangle_{\pm}$.
Repeating the above procedure over the set $\mathcal{S}$ and averaging the weighted differences $P_{j,k}^{+}-P_{j,k}^{-}$, with weights $\partial_{\theta_j}\bm{u}_j(\tau_p)$ and $\partial_{\theta_k}\bm{u}_k(\tau_q)$ and prefactor $T^2/N^2$, yields an unbiased estimator of $M^{R}_{j,k}$.

For simulating imaginary-time evolution, we have the following form to estimate $V^I$, which shares the similar process of estimating $V^R$ but integrating phase shift rule with the Monte Carlo integration. 
\begin{equation}
V_{j}^R \approx -\frac{T}{2N}\sum_{p=1}^N \frac{\partial{\bm{u}_j}(\tau_p)}{\partial{\theta_j}} \cdot \left(Q_{j}^{+} - Q_{j}^{-}\right),  
\end{equation}
where $Q_{j}^{\pm} = \langle \psi^{\pm}|H|\psi^{\pm}\rangle$ and $|\psi^{\pm}\rangle = U_{\bm{\theta}} (T,\tau_p) e^{-i(\pm\frac{1}{4})\pi H_j}|\psi(\tau_p)\rangle$. 
Specifically, the procedure requires $N$ samples ${\tau_p}_{p=1}^N$. For each $\tau_p$, the initial state $|\psi_0\rangle$ is evolved to $|\psi(\tau_p)\rangle$ by $H_c$,  followed by a short-time evolution under $\pm H_j$ with $\Delta\tau=\pi/4$. The system is then evolved from $\tau_p$ to $T$, producing states $|\psi^{\pm}\rangle$ whose energy expectations yield $Q_j^{\pm}$. Averaging the weighted differences $Q_j^{+}-Q_j^{-}$ over all samples, with weights $\partial_{\theta_j}\bm{u}_j(\tau_p)$, provides an estimate of $V_j^{R}$.

Since the dynamics are expressed in terms of the parameter evolution $\bm{\theta}(t)$, the time evolution of system over a small interval $\delta t$, i.e. $|\phi_{t+\delta t}\rangle$, can be approximated using Euler’s method based on the estimated $\dot{\bm{\theta}}$.
\begin{equation}
|\phi_{t+\delta t}\rangle \approx \mathcal{T} \exp\bigg(-i\int_0^T H_c(\bm{\theta}_{t+\delta t},\tau)d\tau\bigg)|\psi_0\rangle,
\end{equation}
where $\bm{\theta}_{t+\delta t} = \bm{\theta}(t) + \dot{\bm{\theta}}\delta t$.

\section{Numerical Results}
In this section, we present universal analog quantum simulation results for both real and imaginary time evolution, together with two representative applications, including the estimation of transition energies and the simulating the out-of-time-ordered correlator.

\subsection{Setting of analog Hamiltonian}
\label{sec:ansatzsetting}
First, we introduce the setup of the parametrized analog Hamiltonian across different hardware platforms, where the distinctions are primarily reflected in the choice of the system Hamiltonian $H_0$. In the following experiments, we mainly focus on two platforms: Transmon and Rydberg atom arrays. For the Rydberg atom array, two system Hamiltonians can be realized by exploiting different choices of atomic energy levels \cite{georgescu2014quantum}. The corresponding constant system Hamiltonians for these systems are presented below,
\begin{itemize}
\item Rydberg atom array (Ising)\\
\begin{equation}
H_0 = \sum_{ij,i\neq j}\frac{C}{|i-j|^6}(I-Z_i)(I-Z_j).
\end{equation}
\item Rydberg atom array (XY)\\
\begin{equation}
H_0=2\sum_{ij,i\neq j}\frac{C}{|i-j|^3}\left(X_iX_j+Y_iY_j\right).
\end{equation}
\item 1D Transmon array\\
\begin{equation}
H_0=\sum_{i}\frac{C}{2}(X_i X_{i+1}+Y_iY_{i+1}).
\end{equation}
\end{itemize}
The controllable Hamiltonians are chosen as $H_j \in \{X, Z\}$. We set the interaction strength $C$ to be of the same order of magnitude as the maximum parameter value $\theta_{\max}=5C$. Additionally, the amplitude basis function $f_l$ is selected to be piecewise linear function defined over intervals of equal length.
\begin{equation}
f_l(t) = 
\left\{
\begin{aligned}
1, \qquad  & (l-1)\frac{T}{d} \leq t<l\frac{T}{d}  \\\
0. \qquad   & \textrm{others}
\end{aligned}
\right.
\end{equation}
where $d$ is the number of the basis function. With such basis function, the integral estimates of physical quantities associated with $\theta_{jl}$ need only be evaluated over the time interval $\big[(l-1)\tfrac{T}{d}, l\tfrac{T}{d}\big]$.

\subsection{Real time evolution}
First, we take the transverse-field Ising model (TFIM) as an example to validate the feasibility of performing UAQS under the different control Hamiltonians~\cite{fradkin2013field}. Here, we consider a 6-qubits TFIM as the target system,
\begin{equation}
H=J\sum_jZ_jZ_{j+1}-h\sum_iX_i.
\end{equation}
In the simulation, we fix the duration of the parametrized analog Hamiltonian to $T = 10\pi$, choose $20$ control parameters for manipulating each control Hamiltonian $H_j$. Consequently, the parametrized analog Hamiltonian involves approximately $120$ tunable parameters in total. Besides, we set $|+\rangle^{\otimes 6}$ as the initial state. The target system dynamics are simulated over the time interval $t \in [0,4]$ using a time step of $\delta t = 0.01$. To quantify the performance of UAQS, we evaluate the nearest-neighbor correlation between the simulated state and the exact evolved state,
\begin{equation}
C=\frac{1}{n-1}\sum_{i=1}^{n-1}Z_iZ_{i+1}
\end{equation}
We separately evaluate the proposed approach on two TFIM with $J=1,h=0.5$ and $J=1,h=1$, and the corresponding results are shown in Fig.~\ref{fig:ising}.

\begin{figure}[t]
    \centering
    \includegraphics[width=\linewidth]{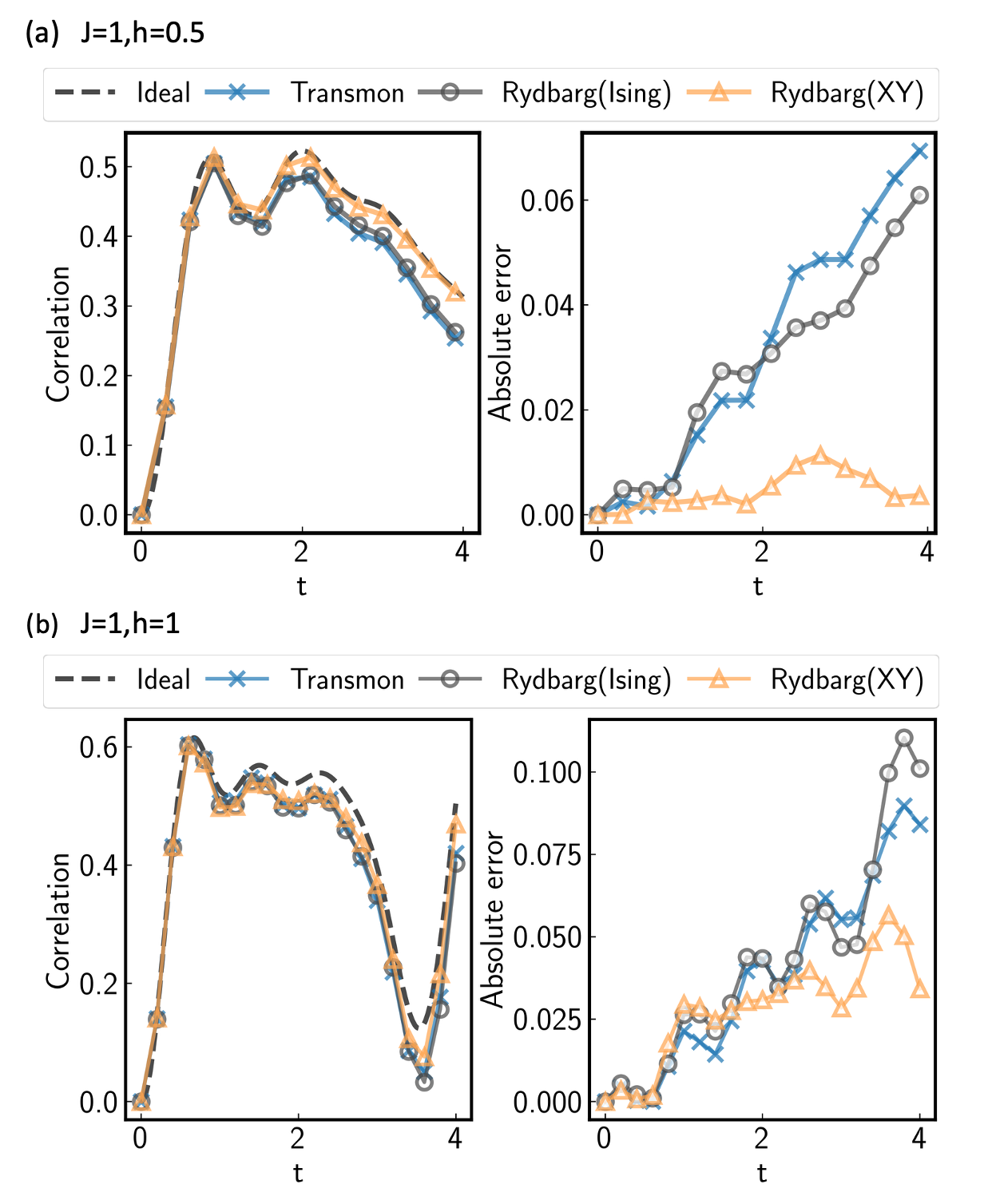}
    \caption{Simulation of the real-time dynamics of TFIM with (a) $J=1$ and $h=0.5$, (b) $J=1$ and $h=1$. The left panel displays the nearest-neighbor correlation of simulated and target system during the evolution, while the right panel reports the corresponding absolute error with respect to the exact dynamics. The dashed black curves represent the exact time evolution, whereas the colored markers correspond to UAQS implemented with different analog platforms, namely Transmon, Rydberg Ising configuration, and Rydberg XY configuration.}
    \label{fig:ising}
\end{figure}

For $h=0.5$, all simulations closely follow the exact correlation dynamics at early and intermediate times, accurately reproducing the correlation behavior. As the evolution time increases, errors gradually accumulate, leading to a nearly monotonic growth of the absolute error. Among the considered platforms, the Rydberg (XY) ansatz consistently yields the smallest error throughout the evolution, whereas the Transmon and Rydberg (Ising) ansatz exhibit larger but comparable deviations at later times. 

In the case of $h=1$, the dynamics become more pronounced, featuring sharper variations in the correlation function. Although the simulations remain accurate at short times, error compared with the exact evolution grows more rapidly than in the $h=0.5$ case, leading to larger absolute errors at late times. Nevertheless, all three analog Hamiltonian successfully capture the qualitative features of the dynamics. Similar to the weak-field case, the Rydberg (XY) ansatz achieves the highest overall accuracy, whereas the Transmon and Rydberg (Ising) ansatz exhibit more pronounced error growth. 

The outperformance of the Rydberg (XY) ansatz can be understood from the perspective of expressibility and controllability. Compared to Ising-type controls, the inclusion of both \textit{XX} and \textit{YY} interaction terms enlarges the dynamical Lie algebra generated by the available control Hamiltonians, which increases in Lie algebra rank expands the reachable manifold of unitaries within a fixed evolution time, thereby enhancing the expressibility of the analog Hamiltonian \cite{d2021introduction}. Besides, the presence of both $YY$ and 
$YY$ interaction terms effectively enlarges the accessible dynamical Lie algebra, improving the controllability of the system within a fixed evolution time. As a result, it can realize smoother and more faithful approximations to the target evolution, leading to reduced accumulated errors compared to Transmon or Rydberg Ising ansatz.
Overall, these results demonstrate that the UAQS can reliably approximate real-time TFIM dynamics across different platforms, with accuracy depending on both the target Hamiltonian and the expressive power of the underlying analog control.

\begin{figure}[t]
    \centering
    \includegraphics[width=0.9\linewidth]{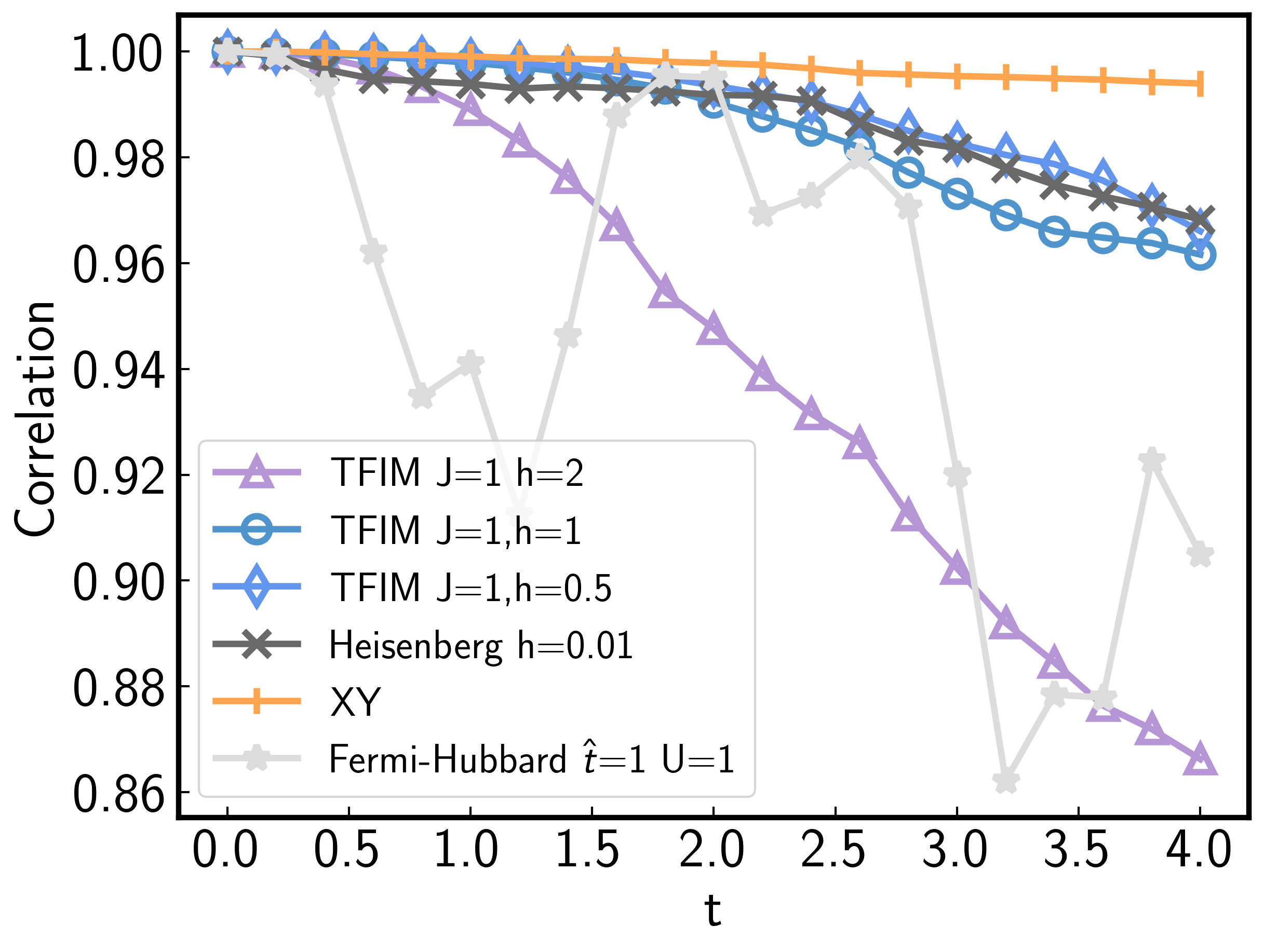}
    \caption{Numerical simulation of real-time evolution of different many-body quantum Hamiltonians.}
    \label{fig:fidelity_rxy}
\end{figure}

Then, we verify the performance of UAQS for simulating the dynamics of different quantum many-body systems based on above mentioned Rydberg (XY) ansatz. The target many-body systems includes 1) TFIM with $J=1,h=\{0.5,1,2\}$. 2) XY model, $H=\sum_{i}^n(X_iX_{i+1}+Y_iY_{i+1})/2$. 3) periodic Heisenberg model, $H=-\frac{1}{2}\sum_{i}^n(X_iX_{i+1}+Y_iY_{i+1}+Z_iZ_{i+1}+hZ_i)$ with $h=0.01$. 4) 1D Fermi-Hubbard model, $H=-\hat{t}\sum_{i,\sigma}\left(\hat{c}_{i,\sigma}^\dagger\hat{c}_{i+1,\sigma}+\hat{c}_{i+1,\sigma}^\dagger\hat{c}_{i,\sigma}\right)+U\sum_i\hat{n}_{i\uparrow}\hat{n}_{i\downarrow}$ with $\hat{t}=1, U=1$. We set the evolution time of target system from $t=0$ to $t=4$ with time interval $\delta t =0.01$. For parametrized analog Hamiltonian, we also adopt similar setup, that is setting $32$ parameters for each control Hamiltonian $H_j$ and total duration as $T=12\pi$. Again, we also choose $|+\rangle^{\otimes n}$ as the initial state. Here, we use fidelity between the simulated state and the exact evolved state as the metric to evaluate the performance of UAQS. The corresponding time-dependent behavior of fidelity are shown in Fig.~\ref{fig:fidelity_rxy}. 

For TFIM with $J=1$ and varying transverse field strengths $h$, the fidelity remains close to $1$ at short evolution times and exhibits a clear field-dependent decay as the dynamics become increasingly nontrivial. In particular, stronger transverse fields lead to a more rapid degradation of fidelity, indicating a growing mismatch between the target Hamiltonian and the restricted control manifold accessible to the analog Hamiltonian. We also notice that UAQS achieves consistently high fidelity when simulating models with compatible interaction structures, such as the XY model and the weak-field Heisenberg model, reflecting its enhanced expressibility in these cases. However, for the Fermi-Hubbard model, the fidelity shows pronounced temporal fluctuations and a faster decay,
which indicates that Rydberg XY ansatz only weakly overlaps with the full Lie algebra required to approximate fermionic hopping and interaction terms. Overall, these results demonstrate that the simulation accuracy of analog Hamiltonian is model-dependent and is governed by the structural compatibility between the target Hamiltonian and the dynamical Lie algebra generated by the available analog controls. 

\subsection{Imaginary time evolution}
In this section, we report UAQS simulation for imaginary-time evolution. As a benchmark, we consider the TFIM with $J=1,h=1$, and simulate its imaginary-time dynamics from $t=0$ to $t=1$ with a time step of $\delta t = 0.01$. The analog Hamiltonian is implemented with a total evolution duration $T = 10\pi$ and $20$ parameters for each control Hamiltonian $H_j$. The results are presented in Fig.~\ref{fig:imagine-time}, which shows the time-dependent energy function of imaginary time $\tilde{\tau}$. 

As shown in Fig.~\ref{fig:imagine-time}, the energy of UAQS decreases monotonically with increasing imaginary time $\tilde{\tau}$ and rapidly converges to the ground-state energy. Besides, the close agreement between the UAQS result and the exact curve indicates that the parametrized analog Hamiltonian accurately tracks the imaginary time trajectory and successfully reproduces the ground-state convergence behavior. The inset displays the fidelity between the our evolved state and the exact imaginary time evolved state. The fidelity remains close to $1$ over the entire evolution, while it has a shallow dip at intermediate times before towards $1$ later. This behavior indicates that although transient the simulation errors arise during the early and intermediate stages, they do not accumulate and are effectively suppressed as the state relaxes toward the ground-state manifold.
In summary, the results demonstrate that UAQS are capable of approximating the imaginary-time dynamics of the TFIM and maintaining high fidelity over the entire evolution.

\begin{figure}[t]
    \centering
    \includegraphics[width=0.9\linewidth]{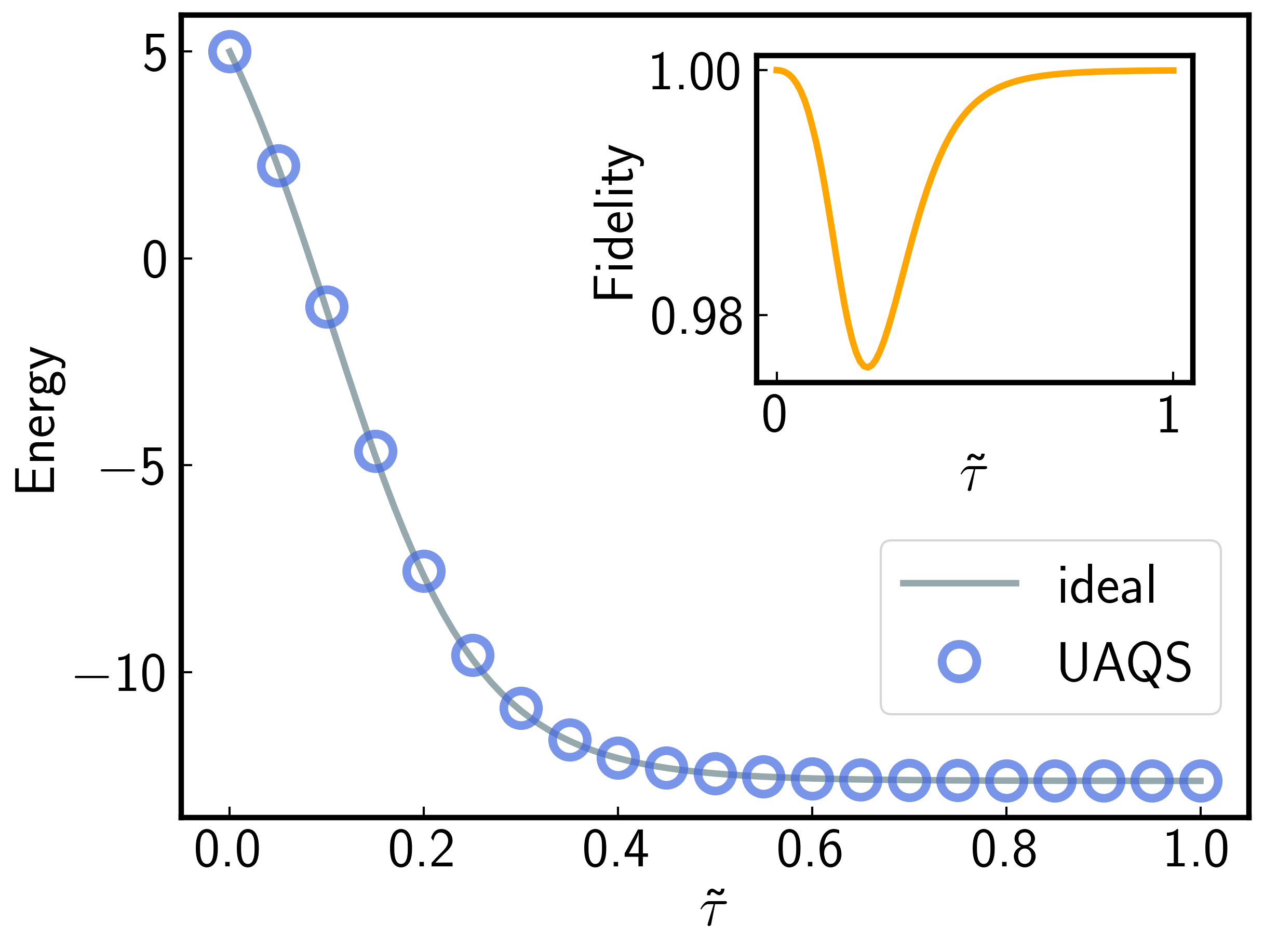}
    
    \caption{Performance of UAQS for imaginary-time evolution of the TFIM $J=1, h=1$. The solid gray line shows the exact imaginary-time evolution, while blue circular markers denote the UAQS results. The inset reports the fidelity between the simulated state and the exact state during the evolution.
}
    \label{fig:imagine-time}
\end{figure}

\subsection{Applications}
In this section, we explore two representative applications of UAQS including transition energy estimation and the simulation of out-of-time-order correlators.

\subsubsection{Estimating transition energy}
As the transition energies between eigenstates of a quantum many-body system can be determined via dynamical simulations~\cite{sun2025probing,yang2024resource,huang2026low}, we explore the UAQS to estimate the transition energy. Specifically, we focus on the following physical quantity,
\begin{equation}
G(\omega)=\sum_{n,n'}\langle n|\rho|n'\rangle \langle n'|\hat{O}|n\rangle e^{-\tau^2 (E_n-E_{n'}-\omega)^2}
\end{equation}
where $|n\rangle $ and $|n'\rangle$ represent the different eigenstates of a given many-body Hamiltonian $H$, and $E_n$ and $E_{n'}$ are the corresponding eigenvalues, respectively.
$G(\omega)$ is a spectral function, that defines a frequency-resolved response associated with an operator $\hat O$ and resolves transition energies $\Delta_{n,n'}=E_n-E_{n'}$, evaluated with respect to a quantum state $\rho$.  When $\tau$ is sufficiently large and the observable $\hat{O}$ is appropriately chosen, $G(\omega)$ takes its local maximum when $\omega$ approaches the transition energies $\Delta_{n,n'}$, which can be viewed as a spectral detector.

Thus, we attempt to calculate the $G(\omega)$ by UAQS of a real time dynamic. Assuming UAQS with a step size of $\Delta t$, then we have the following approximation,
\begin{equation}
G(\omega)\approx\frac{1}{\sqrt{\pi}}\sum_{n=0}^N e^{-\frac{(n\Delta t)^2}{4}}\cos(\tau\omega n\Delta t) \mathrm{Tr}[\rho(\tau n\Delta t)\hat{O}].
\end{equation}

To evaluate the performance of UAQS, we consider test it on 5-qubits Heisenberg models with periodic boundary condition under different external field,
\begin{equation}
H = J\sum_i (X_iX_{i+1}+Y_iY_{i+1}+Z_iZ_{i+1})+h_z\sum_iZ_i
\end{equation}
Specifically, for both cases, the dynamical evolution is simulated using UAQS from $t=0$ to $t=16$ on initial state $|+\rangle^{\otimes 5}$ with a time step $\delta t = 0.01$. We adopt the Transom system to build the parameterized analog Hamiltonian with the duration $T = 10\pi$ and 20 control parameters. 

\begin{figure}[t]
    \centering
    \includegraphics[width=\linewidth]{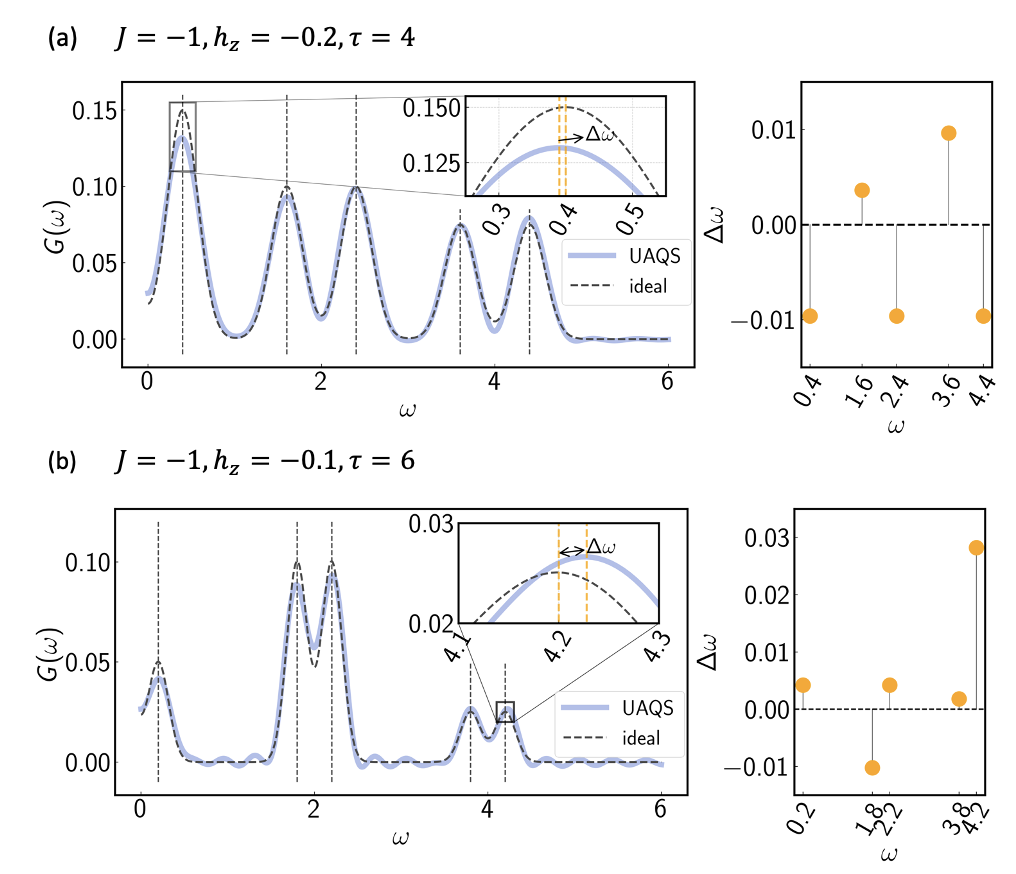}
    \caption{The estimated transition energy spectrum of the Heisenberg model. The subfigure (a) and (b) present the result of the system with $J=-1,h=0.2,\tau=4$ and $J=-1,h=0.1,\tau=6$, separately. In each left panel, the solid light-blue curve represents the spectrum reconstructed from UAQS simulation, while and the vertical dashed lines indicate the exact transition energies and the black dashed curve corresponds to the ideal spectrum. The right panel shows the the error $\Delta \omega$ between the UAQS estimated transition energies and the exact values for each identified transition.
}
    \label{fig:transition_energy}
\end{figure}

Moreover, we take $\hat{O} = 1/5\sum_{j=1}^5 X_j$, $\tau=4$ and $\tau=6$ for case of $h=0.2$ and $h=0.1$, separately. The transition energy spectrum obtained from UAQS is compared with the ideal result, as shown in Fig.~\ref{fig:transition_energy}. 
For both cases, we observed that UAQS accurately reproduces the overall spectral structure, including the number, locations, and relative dominant peaks, which shows a good agreement between the UAQS and ideal dynamics. The error is on the order of $10^{-2}$, with both positive and negative shifts, which indicates UAQS reliably captures the spectral features such that it provides reasonable accuracy in estimating the transition energies.

\subsubsection{Simulating out-of-time order correlator}
An out-of-time-order correlator (OTOC) is a diagnostic quantity used to quantify information scrambling~\cite{fan2017out,swingle2018unscrambling,garttner2017measuring}, operator growth, and quantum chaos in many-body quantum systems. Given two local operators $\hat{W}$ and $\hat{V}$, the OTOC of a pure state $|\psi_0\rangle$ is defined as,
\begin{equation}
    C(t)=\langle\psi_0|\hat{W}^\dagger(t)\hat{V}^{\dagger}\hat{W}(t)\hat{V}|\psi_0\rangle
\end{equation}
where $H$ is the system Hamiltonian and $\hat{W}(t)=e^{i\hat{H}t}\hat{W}e^{-i\hat{H}t}$. 
In general, OTOCs are challenging to measure. Nevertheless, when $\hat{W}$ and $\hat{V}$ are selected as single-qubit Pauli operators and the initial state $|\psi_0\rangle$ is prepared as an eigenstate of $\hat{V}$, the OTOC becomes accessible via straightforward measurements following the dynamical evolution.
Assuming we are capable of efficiently preparing the eigenstate $|\psi_0\rangle$ of $\hat{V}$, such as $V=H^{\otimes n}, |\psi_0\rangle=|+\rangle^{\otimes n}$, thus we have,
\begin{align}
C(t)&=\langle\psi_0|e^{-i\hat{H}t}\hat{W}^\dagger e^{i\hat{H}t}\hat{V}e^{i\hat{H}t}\hat{W}e^{-i\hat{H}t}|\psi_0\rangle \nonumber \\
&=\langle\psi(t)|\hat{V}|\psi(t)\rangle
\end{align}
where $|\psi(t)\rangle=e^{i\hat{H}t}\hat{W}e^{-i\hat{H}t}|\psi_0\rangle$. 
Here, we employ UAQS to implement the state $ |\psi(t)\rangle$ by using a parametrized analog Hamiltonian to evolve the initial state $|\psi_0\rangle$ under the Hamiltonian $H$, subsequently applying the local operator $\hat{W}$, and finally evolving the system under the reversed Hamiltonian $-H$. Specifically, we select a 6-qubits Heisenberg model as an example, i.e. $H=J\sum_{i=1}^n(X_iX_{i+1}+Y_iY_{i+1})/2$, where $J=1$. Additionally, to easily measure OTOCs and shows the phonomenion of probing operator spreading, we separately consider the OTOCs with $\hat{W}=\hat{\sigma}_n^x, \{\hat{V}=\hat{\sigma}_j^x|j=1,\dots,n\}$ and $\hat{W}=\hat{\sigma}_n^z, \{\hat{V}=\hat{\sigma}_j^z|j=1,\dots,n\}$ where $\hat{W}$ is a single operator acting on the last qubit, and $\hat{V}$ are operators acting on each qubit. Thus, we donate the OTOC with $\hat{W}=\hat{\sigma}^x_n$ and $\hat{V}=\hat{\sigma}_j^x$ as $C^{XX}_j(t)$, similarly, the other case is $C^{ZZ}_j(t)$. Here, we simulate the dynamical evolution of the Heisenberg model on $|\psi_0\rangle=|+\rangle^n$ and $|\psi_0\rangle=|010101\rangle$ with $4$ total evolution time and a step size of 0.01 by UAQS for $C_j^{XX}$ and $C_j^{ZZ}$, respectively. The corresponding results are presented in Fig.~\ref{fig:otoc}. 

\begin{figure}[t]
\centering
    \centering
    \includegraphics[width=\linewidth]{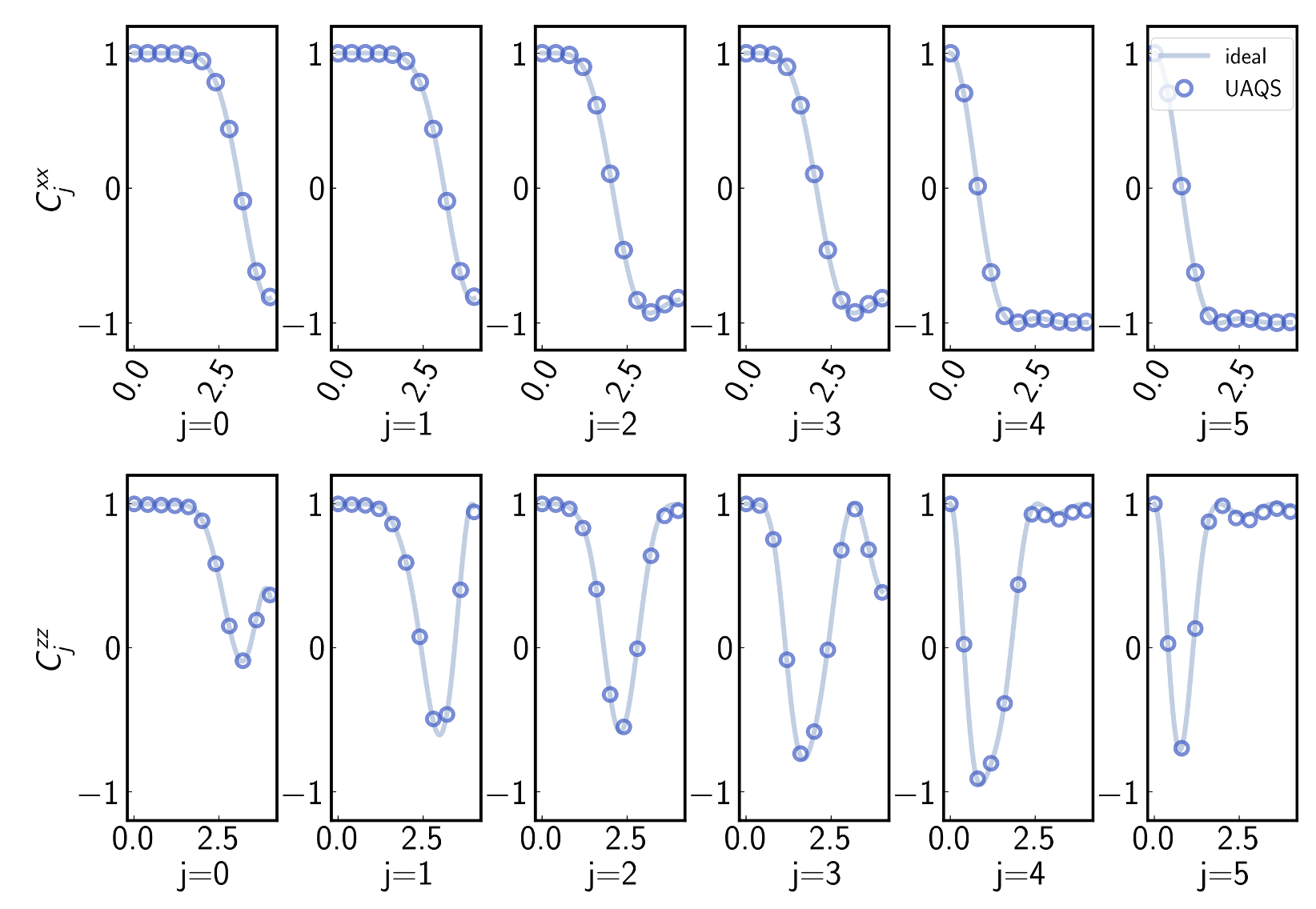}
    \caption{The experimental results of the site-resolved dynamics of OTOC for two different local operators, i.e. $C_j^{xx},C_j^{zz}$. The solid curves in each subfigure denote the exact results, and the open circles represent the UAQS results. The top row shows the correlator $C_j^{xx}(t)$ while the bottom row shows $C_j^{zz}(t)$. Each column corresponds to a lattice site $j$ ranging from $j=0$ to $j=n$.}
    \label{fig:otoc}
\end{figure}

The Fig.~\ref{fig:otoc} demonstrates the spatiotemporal evolution of OTOCs for two different local operators that reveals clear signatures of information scrambling and operator growth. In both cases, the correlators are initially close to $1$, which indicates the operators commute at early times, and subsequently decay as time evolves. The onset of decay occurs earlier for sites closer to $n$ and is progressively delayed for more distant sites that demonstrates a light-cone-like propagation of information with a finite spreading velocity and directly reflects the causal structure of many-body dynamics. In all panels, the close agreement between the UAQS results and the ideal evolution curves confirms that such simulation accurately captures both the spatially resolved scrambling dynamics and the operator-dependent features of the OTOCs.

\section{Conclusion}
In this work, we introduce a systematic framework for expanding the dynamical capabilities of fixed analog quantum platforms. By embedding optimization directly into continuous-time quantum control, UAQS bridges the conceptual gap between digital variational algorithms and native analog evolution. Rather than relying on gate decomposition or universal logical operations, our approach formulates both real-time dynamics and state preparation as control optimization problems executed within hardware-defined Hamiltonians. This parametrized pulse paradigm preserves the efficiency and low-depth advantages of analog simulation while significantly enhancing its effective expressibility. The accompanying gradient-evaluation strategy further enables stable and scalable parameter optimization in experimentally realistic settings. Together, these results establish UAQS as a versatile and hardware-compatible route toward programmable analog quantum simulation, providing a practical pathway for exploring complex many-body dynamics and non-equilibrium phenomena on near-term quantum devices.

\begin{acknowledgments}
This work is supported by Quantum Science and Technology-National Science and Technology Major Project (2023ZD0300200), the National Natural Science Foundation of China (Grant Nos.~U2330201, 22303005), and Beijing Natural Science Foundation Z250004, 
Beijing Science and Technology Planning Project (Grant No.~Z25110100810000),. The numerical experiments of this work are supported by the High-performance Computing Platform of Peking University.
\end{acknowledgments}

\bibliographystyle{unsrtnat}
\bibliography{mybib.bib}

\clearpage
\newpage
\onecolumngrid
\appendix

\section{Proof of theorems}\label{sec:proof}
In  this section, we provide the theoretical foundations and technical details underlying the main results presented in the text. We first review the variational principle that forms the basis of the proposed framework, clarifying its formulation and the assumptions required for its application to continuous-time quantum dynamics. Building on this principle, we then present a detailed proof of the theorems, with particular emphasis on the explicit estimation of the quantities $M$ and $V$ introduced in the main text.

\subsection{Variational principle}
Consider the real time dynamics of a many-body system which follows the Hamiltonian $H$, the time evolution is governed by the Schrödinger equation,
\begin{equation}
\frac{d|\phi\rangle}{dt}=-iH|\phi\rangle,
\end{equation}
In the variational framework, instead of directly calculate state $|\phi(t)\rangle$, we approximate it with parametrized trial state $|\psi(\bm{\theta}_t)\rangle$ which is generated by evolving the $|\psi_0\rangle$ with control Hamiltonian $H(\bm{\theta}_t)$ from $0$ to $T$.

\paragraph{\bf{Real time evolution}}
First, we introduce how to apply the McLachlan’s variational principle to UAQS. The main idea is to project the original evolution to the parametrized evolution and then minimize the distance over the derivative of the parameters,
\begin{equation}\label{eq:mprinciple_real}
\delta\left\|\left( \frac{d}{dt}+iH \right)|\psi(\bm{\theta}_t)\rangle\right\| = 0,
\end{equation} 
where $\|\rho\|$ is the trace norm of a state $\rho$, $\delta(x)$ is the delta function. Since the square of $\|\rho\|$ does not effect the variation, we now consider the $\|\rho\|^2$ for simplifying the calculation.
Let $L = \left\|\left( \frac{d}{dt}+iH \right)|\psi(\bm{\theta}_t)\rangle\right\|^2$ and $\dot{\theta_j}$ be the derivative of $\theta_j$ with respect to $t$, 
\begin{align}
L &=\left(\left(d/dt+iH\right)|\psi(\bm{\theta}_t\rangle)\right)^{\dagger}(d/dt+iH)\left|\psi(\bm{\theta}_t)\right\rangle,   \\
&=\sum_{i,j}\frac{\partial\left<\psi(\bm{\theta}_t)\right|}{\partial\theta_i}\frac{\partial\left|\psi(\bm{\theta}_t)\right>}{\partial\theta_j}\dot{\theta}_i^*\dot{\theta}_j+i\sum_i\frac{\partial\left<\psi(\bm{\theta}_t)\right|}{\partial\theta_i}H\left|\psi(\bm{\theta}_t)\right>\dot{\theta}_i^* \\
&-i\sum_i\left<\psi(\bm{\theta}_t)\right|H\frac{\partial\left|\psi(\bm{\theta}_t)\right>}{\partial\theta_i}\dot{\theta}_i+\left<\psi(\bm{\theta}_t)\right|H^2\left|\psi(\bm{\theta}_t)\right>.
\end{align}
\begin{itemize}
\item Considering $\bm{\theta} \in \mathbb{C}$, we have
\begin{align}
\frac{\partial{L}}{\partial\dot{\theta}_i}& =\left(\sum_j\frac{\partial\left<\psi(\bm{\theta}_t)\right|}{\partial\theta_i}\frac{\partial\left|\psi(\bm{\theta}_t)\right>}{\partial\theta_j}\dot{\theta}_j+i\frac{\partial\left<\psi(\bm{\theta}_t)\right|}{\partial\theta_i}H\left|\psi(\bm{\theta}_t)\right>\right)\delta\dot{\theta}_i^*,  \\
&+\left(\sum_j\frac{\partial\left<\psi(\bm{\theta}_t)\right|}{\partial\theta_j}\frac{\partial\left|\psi(\bm{\theta}_t)\right>}{\partial\theta_i}\dot{\theta}_j^*-i\frac{\partial\left<\psi(\bm{\theta}_t)\right|}{\partial\theta_i}H\left|\psi(\bm{\theta}_t)\right>\right)\delta\dot{\theta}_i.
\end{align}
According to Eq.\eqref{eq:mprinciple_real}, we have the 
\begin{equation}
\sum_j\frac{\partial\left<\psi(\bm{\theta}_t)\right|}{\partial\theta_i}\frac{\partial\left|\psi(\bm{\theta}_t)\right>}{\partial\theta_j}\dot{\theta}_j=-i\frac{\partial\left<\psi(\bm{\theta}_t)\right|}{\partial\theta_i}H\left|\psi(\bm{\theta}_t)\right>.
\end{equation}
Equivalently, it can be expressed as linear equation of parameters,
\begin{equation}
\sum_jM_{i,j}\dot{\theta}_j=-iV_i,
\end{equation}
where $M_{i,j}$ and $V_i$ is defined as,
\begin{align}
\label{eq:Mr}
M_{i,j}& = \frac{\partial\left<\psi(\bm{\theta}_t)\right|}{\partial\theta_i}\frac{\partial\left|\psi(\bm{\theta}_t)\right>}{\partial\theta_j}\\
V_i &= \frac{\partial\left\langle\psi(\boldsymbol{\theta}_t)\right|}{\partial\theta_{i}}H\left|\psi(\boldsymbol{\theta}_t)\right\rangle 
\end{align}
\item Considering $\bm{\theta}\in \mathbb{R}$, we have
\begin{align}
\label{eq:Vi}
\frac{\partial{L}}{\partial\dot{\theta}_i}& = \sum_j\left(\frac{\partial\left<\psi(\bm{\theta}_t)\right|}{\partial\theta_i}\frac{\partial\left|\psi(\bm{\theta}_t)\right>}{\partial\theta_j}+\frac{\partial\left<\psi(\bm{\theta}_t)\right|}{\partial\theta_j}\frac{\partial\left|\psi(\bm{\theta}_t)\right>}{\partial\theta_i}\right)\dot{\theta}_j\delta\dot{\theta_i}\\
&-i\left(\frac{\partial\left<\psi(\bm{\theta}_t)\right|}{\partial\theta_i}H\left|\psi(\bm{\theta}_t)\right>-\left<\psi(\bm{\theta}_t)\right|H\frac{\partial\left|\psi(\bm{\theta}_t)\right>}{\partial\theta_i}\right)\delta\dot{\theta_i}.
\end{align}
Similarly, according to Eq.\eqref{eq:mprinciple_real} we have the following linear equation of parameters,
\begin{equation}
\sum_jM_{i,j}^R\dot{\theta}_j=V_i^I,
\end{equation}
where $M_{i,j}^R$ and $V_i^I$ are the real and imaginary parts of $M_{i,j}$ and $V_i$, respectively.
\end{itemize}

\paragraph{\bf{Imaginary time evolution}}
For the imaginary time evolution, by applying McLachlan's variational principle, we have
\begin{equation}\label{eq:mprinciple_img}
\delta\left\|\left( \frac{d}{d\tau}+H - E_\tau \right)|\psi(\bm{\theta}_\tau)\rangle\right\| = 0,
\end{equation}
Similarly, let $L = \left\|\left( \frac{d}{d\tau}+H - E_\tau \right)|\psi(\bm{\theta}_\tau)\rangle\right\|^2$, then we have
\begin{align}
L &= \left(\left(d/d\tau+H - E_\tau\right)|\psi(\bm{\theta}_\tau\rangle)\right)^{\dagger}\left(d/d\tau+H - E_\tau\right)|\psi(\bm{\theta}_\tau\rangle)\\
&=\sum_{i,j}\frac{\partial\left<\psi(\bm{\theta}_\tau)\right|}{\partial\theta_i}\frac{\partial\left|\psi(\bm{\theta}_\tau)\right>}{\partial\theta_j}\dot{\theta}_i^*\dot{\theta}_j+\sum_i\frac{\partial\left<\psi(\bm{\theta}_\tau)\right|}{\partial\theta_i}(H - E_\tau)\left|\psi(\bm{\theta}_\tau)\right>\dot{\theta}_i^* \\
&+\sum_i\left<\psi(\bm{\theta}_\tau)\right|(H - E_\tau)\frac{\partial\left|\psi(\bm{\theta}_\tau)\right>}{\partial\theta_i}\dot{\theta}_i+\left<\psi(\bm{\theta}_\tau)\right|(H - E_\tau)^2\left|\psi(\bm{\theta}_\tau)\right>.
\end{align}
\begin{itemize}
\item Considering $\bm{\theta}\in \mathbb{C}$, we have
\begin{align}
\frac{\partial{L}}{\partial\dot{\theta}_i}& =\left(\sum_j\frac{\partial\left<\psi(\bm{\theta}_\tau)\right|}{\partial\theta_i}\frac{\partial\left|\psi(\bm{\theta}_\tau)\right>}{\partial\theta_j}\dot{\theta}_j+\frac{\partial\left<\psi(\bm{\theta}_\tau)\right|}{\partial\theta_i}(H - E_\tau)\left|\psi(\bm{\theta}_\tau)\right>\right)\delta\dot{\theta}_i^*,  \\
&+\left(\sum_j\frac{\partial\left<\psi(\bm{\theta}_\tau)\right|}{\partial\theta_j}\frac{\partial\left|\psi(\bm{\theta}_\tau)\right>}{\partial\theta_i}\dot{\theta}_j^*+\frac{\partial\left<\psi(\bm{\theta}_\tau)\right|}{\partial\theta_i}(H - E_\tau)\left|\psi(\bm{\theta}_\tau)\right>\right)\delta\dot{\theta}_i.
\end{align}
According to Eq.\eqref{eq:mprinciple_img}, we have the 
\begin{equation}
\sum_j\frac{\partial\left<\psi(\bm{\theta}_\tau)\right|}{\partial\theta_i}\frac{\partial\left|\psi(\bm{\theta}_\tau)\right>}{\partial\theta_j}\dot{\theta}_j=-\frac{\partial\left<\psi(\bm{\theta}_\tau)\right|}{\partial\theta_i}(H - E_\tau)\left|\psi(\bm{\theta}_\tau)\right>.
\end{equation}
Equivalently, it can be written as linear equation of parameters,
\begin{equation}
\sum_jM_{i,j}\dot{\theta}_j = V_i,
\end{equation}
where $M_{i,j}$ and $V_i$ is defined as,
\begin{align}
\label{eq:Mr}
M_{i,j}& = \frac{\partial\left<\psi(\bm{\theta}_\tau)\right|}{\partial\theta_i}\frac{\partial\left|\psi(\boldsymbol{\theta}_\tau)\right>}{\partial\theta_j},\\
V_i &= \frac{\partial\left\langle\psi(\bm{\theta}_\tau)\right|}{\partial\theta_{i}}(H - E_\tau)\left|\psi(\bm{\theta}_\tau)\right\rangle.
\end{align}
\item Considering $\bm{\theta}\in \mathbb{R}$, we have
\begin{align}
\frac{\partial{L}}{\partial\dot{\theta}_i}& =\sum_j\left(\frac{\partial\left<\psi(\bm{\theta}_\tau)\right|}{\partial\theta_i}\frac{\partial\left|\psi(\bm{\theta}_\tau)\right>}{\partial\theta_j}+\frac{\partial\left<\psi(\bm{\theta}_\tau)\right|}{\partial\theta_j}\frac{\partial\left|\psi(\bm{\theta}_\tau)\right>}{\partial\theta_i}\right)\dot{\theta}_j\delta\dot{\theta}_i,  \\
&+\left(\frac{\partial\left<\psi(\bm{\theta}_\tau)\right|}{\partial\theta_i}(H - E_\tau)\left|\psi(\bm{\theta}_\tau)\right>+\left<\psi(\bm{\theta}_\tau)\right|(H - E_\tau)\frac{\partial\left|\psi(\bm{\theta}_\tau)\right>}{\partial\theta_i}\right)\delta\dot{\theta}_i.
\end{align}

According to Eq.\eqref{eq:mprinciple_img}, we have the
\begin{align}
&\sum_j\left(\frac{\partial\left<\psi(\bm{\theta}_\tau)\right|}{\partial\theta_i}\frac{\partial\left|\psi(\bm{\theta}_\tau)\right>}{\partial\theta_j}+\frac{\partial\left<\psi(\bm{\theta}_\tau)\right|}{\partial\theta_j}\frac{\partial\left|\psi(\bm{\theta}_\tau)\right>}{\partial\theta_i}\right)\dot{\theta}_j\\
&=-\left(\frac{\partial\left<\psi(\bm{\theta}_\tau)\right|}{\partial\theta_i}(H - E_\tau)\left|\psi(\bm{\theta}_\tau)\right>+\left<\psi(\bm{\theta}_\tau)\right|(H - E_\tau)\frac{\partial\left|\psi(\bm{\theta}_\tau)\right>}{\partial\theta_i}\right).
\end{align}
Then, it corresponds to a linear equation of parameters,
\begin{equation}
\sum_jM_{i,j}^R\dot{\theta}_j = -V_i^R,
\end{equation}
Besides, as the parametrized states $\left|\psi(\bm{\theta}_\tau)\right>$ is normalized, thus we have $\Re[\frac{\partial\left<\psi(\bm{\theta}_\tau)\right|}{\partial\theta_i}E_\tau\left|\psi(\bm{\theta}_\tau)\right>]=0$.
\end{itemize}

\subsection{Estimate M and V}
In this section, we first provide the related lemmas and theorems to derivative the main theorems for estimating the $M$ and $V$, and then give the corresponding pseudo-codes. Since the parameters $\bm{\theta}$ in tunable pulses are real in general, we then only focus on the methods over the real parameters space.

\begin{theorem}[\bf{deravative of time evolution operator}~\cite{leng2022differentiable,kottmann2023evaluating}]\label{thm:deravative_evo_opt}
Given the Hamiltonian $H(\bm{\theta},t)= \sum_{j=1}^m \bm{u}_j(\bm{\theta},t)H_j$, where $\{H_j\}$ are the controllable Hamiltonian. The derivative of time evolution operator $U_{\bm{\theta}}(t_2,t_1)$ for time interval $[t_1, t_2]$ under Hamiltonian $H(\bm{\theta},t)$ with respect to parameters $\bm{\theta}_t$ is
\begin{equation}
\frac{\partial{U_{\bm{\theta}}}(t_2,t_1)}{\partial{\theta_j}} = -i\int_{t_1}^{t_2}U_{\bm{\theta}}(t_2,\tau)H_j U_{\bm{\theta}}(\tau,t_1)\cdot \frac{\partial{\bm{u}_j}}{\partial{\theta_j}}d\tau.
\end{equation}
\end{theorem}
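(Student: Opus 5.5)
The plan is to derive the formula from the Schrödinger equation obeyed by $U_{\bm{\theta}}(t,t_1)$ via a Duhamel (variation-of-constants) argument. Fix $t_1$ and set $W(t)=\partial_{\theta_j}U_{\bm{\theta}}(t,t_1)$. The propagator satisfies $\partial_t U_{\bm{\theta}}(t,t_1)=-iH(\bm{\theta},t)U_{\bm{\theta}}(t,t_1)$ with $U_{\bm{\theta}}(t_1,t_1)=I$. We differentiate this equation with respect to $\theta_j$ and exchange $\partial_{\theta_j}$ with $\partial_t$. Since only $u_j$ depends on $\theta_j$ (each pulse has its own parameters, as in Eq.~\eqref{Eq:analogH}), this gives the inhomogeneous linear equation
\begin{equation}
\partial_t W(t)=-iH(\bm{\theta},t)W(t)-i\,\frac{\partial u_j(\bm{\theta},t)}{\partial\theta_j}H_jU_{\bm{\theta}}(t,t_1),\qquad W(t_1)=0 .
\end{equation}
A constant $H_0$ term is harmless, since it carries no $\theta_j$ dependence.

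The next step is to solve this by moving to the interaction picture. Define $\tilde W(t)=U_{\bm{\theta}}(t,t_1)^\dagger W(t)=U_{\bm{\theta}}(t_1,t)W(t)$. Using $\partial_t U_{\bm{\theta}}(t_1,t)=iU_{\bm{\theta}}(t_1,t)H(\bm{\theta},t)$, the homogeneous term cancels, leaving
\[
\partial_t\tilde W=-i\,\partial_{\theta_j}u_j(t)\,U_{\bm{\theta}}(t_1,t)H_jU_{\bm{\theta}}(t,t_1).
\]
We integrate from $t_1$ to $t_2$ using $\tilde W(t_1)=0$, then multiply on the left by $U_{\bm{\theta}}(t_2,t_1)$. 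The composition law $U_{\bm{\theta}}(t_2,t_1)U_{\bm{\theta}}(t_1,\tau)=U_{\bm{\theta}}(t_2,\tau)$ then yields exactly
\[
\partial_{\theta_j}U_{\bm{\theta}}(t_2,t_1)=-i\int_{t_1}^{t_2}U_{\bm{\theta}}(t_2,\tau)H_jU_{\bm{\theta}}(\tau,t_1)\,\partial_{\theta_j}u_j(\tau)\,d\tau .
\]
Uniqueness of solutions of this linear ODE guarantees that the integral expression is the derivative, not merely a candidate. As a cross-check, the same result also follows from a Trotter/product-formula argument: write $U$ as an ordered product of short-time factors, apply the product rule to the single factor at time $\tau$, and pass to the Riemann-sum limit.

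The main obstacle is justifying the interchange of $\partial_{\theta_j}$ and $\partial_t$, i.e.\ showing that $U_{\bm{\theta}}$ is differentiable in $\bm{\theta}$ with $W$ satisfying the differentiated equation. I would handle this by assuming $u_j$ is continuously differentiable in $\theta_j$ and bounded, which holds for the normalized pulses since $\mathcal{N}$ is smooth. Under this assumption, standard smooth dependence of solutions of finite-dimensional linear ODEs on parameters applies.

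A subtlety arises with the piecewise-constant basis $f_l$. Here $u_j$ is only piecewise continuous in $\tau$, so I would work with Carathéodory solutions, or apply the argument on each subinterval and glue using the composition law. Interpreting $\partial u_j/\partial\theta_j$ as the derivative with respect to a specific $\theta_{jl}$ restricts the integral to the $l$-th interval, which is consistent with the remark in Sec.~\ref{sec:ansatzsetting}.
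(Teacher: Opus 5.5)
Your derivation is correct, but there is no proof in the paper to compare it with. Theorem~\ref{thm:deravative_evo_opt} is quoted from the cited works and then used as a black box in Theorems~\ref{thm:M_R}--\ref{thm:V_I}, so your Duhamel/interaction-picture argument supplies a justification the paper omits. The computation checks: with $\tilde W(t)=U_{\bm{\theta}}(t_1,t)W(t)$ the homogeneous term cancels, and the composition law produces the stated integrand.

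You also rightly surface a hypothesis the statement hides. As written, $u_j(\bm{\theta},t)$ could depend on all of $\bm{\theta}$. In that case $\partial_{\theta}H=\sum_k \partial_{\theta}u_k\,H_k$ and the single-$H_j$ formula would be false. It is the per-pulse parametrization $u_j(\bm{\theta}_j,\tau)$ of Eq.~\eqref{Eq:analogH} that makes it hold. Two further conditions are assumptions you are adding, since the paper never specifies $\mathcal{N}$: that $\mathcal{N}$ is smooth, and that it acts pointwise in $\tau$. The pointwise condition is what confines the integral to the $l$-th interval.

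One refinement: your uniqueness sentence does not do the work you assign it. Uniqueness only identifies $W$ after you know it exists and obeys the differentiated equation. You can dispense with the parameter-dependence theorem altogether. Take $\bm{\theta}'$ differing from $\bm{\theta}$ only in $\theta_j$. Differentiating $\tau\mapsto U_{\bm{\theta}'}(t_2,\tau)U_{\bm{\theta}}(\tau,t_1)$ gives the exact identity
\[
U_{\bm{\theta}'}(t_2,t_1)-U_{\bm{\theta}}(t_2,t_1)=-i\int_{t_1}^{t_2}U_{\bm{\theta}'}(t_2,\tau)\bigl[u_j(\bm{\theta}',\tau)-u_j(\bm{\theta},\tau)\bigr]H_j\,U_{\bm{\theta}}(\tau,t_1)\,d\tau ,
\]
which is valid for absolutely continuous (Carathéodory) propagators. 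The same identity on $[\tau,t_2]$ plus unitarity gives
\[
\sup_{\tau}\bigl\|U_{\bm{\theta}'}(t_2,\tau)-U_{\bm{\theta}}(t_2,\tau)\bigr\|\le\|H_j\|\int_{t_1}^{t_2}\bigl|u_j(\bm{\theta}',s)-u_j(\bm{\theta},s)\bigr|\,ds\to 0 ,
\]
so the propagator is continuous in $\theta_j$. Now divide by $\theta_j'-\theta_j$ and apply dominated convergence, using boundedness of $\partial_{\theta_j}u_j$. This yields existence of the derivative and the formula simultaneously, and covers piecewise-constant pulses automatically.

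This is also cleaner than your product-formula cross-check, which requires controlling a double limit (step size and parameter increment). That route's only advantage is interpretive: it mirrors the ``evolve to $\tau$, insert $H_j$, continue'' structure of the estimators.
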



\begin{theorem}[\bf{direct measurement protocol}~\cite{mitarai2019methodology}]\label{thm:measure}
The following protocol can be used to estimate the real part of the quantity $\langle\psi|W^\dagger UWG|\psi\rangle$, where $W$ is unitary operator, $U$ and $G$ are the Pauli string, i.e. $U,G\in \{I,\sigma_x, \sigma_y, \sigma_z\}^{\otimes n}$, 
\begin{equation}
\Re[\langle\psi|W^\dagger UWG|\psi\rangle] = p(M_G=+1)\langle U\rangle_{M_G=+1}-p(M_G=-1)\langle U\rangle_{M_G=-1}
\end{equation}
where $p(M_G=\pm 1)$ is the probability of getting $M_G=\pm 1$ by performing projective measurement $\mathcal{M}_G$ of $G$ on $|\psi\rangle$, where $\mathcal{M}_G$ can be performed by first transferring G to a single qubit $\sigma_z$ and then performing a measurement on the qubit nondestructively. $\langle U\rangle_{M_G=\pm 1}$ are the expectation value of $U$,
\begin{equation}
\langle U\rangle_{M_G=\pm1}=\frac1{4p(M_G=\pm1)}\left\langle\psi|(I\pm G)W^\dagger UW(I\pm G)|\psi\right\rangle 
\end{equation}
\end{theorem}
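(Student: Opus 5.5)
The identity follows from writing the Pauli string $G$ through its spectral projectors. Since $G$ is Hermitian, unitary, and satisfies $G^2=I$, its projectors onto the $\pm1$ eigenspaces are $\Pi_\pm=\tfrac12(I\pm G)$. They obey $\Pi_\pm^2=\Pi_\pm$, $\Pi_\pm^\dagger=\Pi_\pm$, $\Pi_++\Pi_-=I$, and $\Pi_+-\Pi_-=G$. Set $A\equiv W^\dagger UW$; because $U$ is a Pauli string and $W$ is unitary, $A$ is Hermitian. By the Born rule, the measurement $\mathcal{M}_G$ on $|\psi\rangle$ gives outcome $\pm1$ with probability $p(M_G=\pm1)=\langle\psi|\Pi_\pm|\psi\rangle$ and leaves the normalized state $|\psi_\pm\rangle=\Pi_\pm|\psi\rangle/\sqrt{p(M_G=\pm1)}$. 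Evolving by $W$ and measuring $U$ then gives
\[
\langle U\rangle_{M_G=\pm1}=\frac{\langle\psi|\Pi_\pm A\Pi_\pm|\psi\rangle}{p(M_G=\pm1)}=\frac{1}{4p(M_G=\pm1)}\langle\psi|(I\pm G)A(I\pm G)|\psi\rangle ,
\]
which is the second displayed formula of Theorem~\ref{thm:measure}.

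Multiplying by the probabilities cancels the normalizations, so the right-hand side of the first formula equals
\[
\tfrac14\langle\psi|(I+G)A(I+G)|\psi\rangle-\tfrac14\langle\psi|(I-G)A(I-G)|\psi\rangle .
\]
Expand both products. The terms $A$ and $GAG$ appear with the same sign in each and cancel, while the cross terms add, leaving
\[
\tfrac12\langle\psi|(AG+GA)|\psi\rangle .
\]
Since $A$ and $G$ are Hermitian, $(AG)^\dagger=GA$, so $\langle\psi|GA|\psi\rangle=\overline{\langle\psi|AG|\psi\rangle}$. The expression therefore equals $\Re\langle\psi|AG|\psi\rangle=\Re\langle\psi|W^\dagger UWG|\psi\rangle$, which proves the claimed equality.

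The algebra is routine. The main care point is justifying that the operational protocol really implements $\Pi_\pm$: measuring the multi-qubit Pauli string $G$ nondestructively must leave exactly $\Pi_\pm|\psi\rangle$ up to normalization, and not a state that has been further disturbed. To settle this, note that any Pauli string is Clifford-equivalent to a single-qubit $Z$, that is, $G=C^\dagger Z_1 C$ for some Clifford $C$. Then $\Pi_\pm=C^\dagger\tfrac12(I\pm Z_1)C$. Applying $C$, measuring qubit $1$ projectively in the computational basis without discarding the register, and undoing with $C^\dagger$ realizes the Lüders update by $\Pi_\pm$. One must also check the degenerate case $p(M_G=\pm1)=0$: there the corresponding term is simply zero and the identity still holds, since $\Pi_\pm|\psi\rangle=0$.
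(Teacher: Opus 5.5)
Your proof is correct and is the standard derivation: the paper gives no proof of this statement and simply imports it from Mitarai--Fujii, where it follows from exactly your expansion $\tfrac14\langle\psi|[(I+G)A(I+G)-(I-G)A(I-G)]|\psi\rangle=\tfrac12\langle\psi|(AG+GA)|\psi\rangle$ with $A=W^\dagger UW$. One small nit: your Clifford-equivalence remark needs $G\neq I$, but the excluded case $G=I$ is trivial, since then $p(M_G=+1)=1$, $p(M_G=-1)=0$, and the identity reduces to $\Re\langle\psi|A|\psi\rangle=\langle\psi|A|\psi\rangle$, which holds because $A$ is Hermitian.
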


\begin{lemma}[\bf{generalized parameter phase shift rule}~\cite{wierichs2022general}]
\label{lemma:phase_shift_rule}
Given an observable $\hat{O}$ and the Hermitian generator $H$ of parametrized unitary $U(\theta)=e^{-i\frac{\theta}{2}H}$ has eigenvalues $\{\pm \lambda, 0\}$, then we can estimate the derivative of function $f(\theta)=\langle \psi|U^\dagger_{\theta}\hat{O}U_{\theta}|\psi\rangle$ with respect to $\theta$ by
\begin{align}
\partial_{\theta}f = \frac{\lambda}{2}\left[2\left(f(\theta + \frac{\pi}{4\lambda}) - f(\theta - \frac{\pi}{4\lambda} )\right) +(1-\sqrt{2})\left(f(\theta+\frac{\pi}{2\lambda}) - f(\theta-\frac{\pi}{2\lambda})\right)\right]
\end{align}
\end{lemma}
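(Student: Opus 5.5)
The plan is to use the standard spectral argument behind general parameter-shift rules~\cite{wierichs2022general}. First I show that $f(\theta)$ is a trigonometric polynomial with only two nonzero frequencies. Then I check that the stated four-point combination reproduces the derivative of each frequency component.

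First, diagonalize the generator, $H=\sum_k \mu_k |k\rangle\langle k|$ with $\mu_k\in\{\lambda,0,-\lambda\}$, and expand
\begin{equation}
f(\theta)=\sum_{k,l}\langle\psi|k\rangle\langle k|\hat O|l\rangle\langle l|\psi\rangle\, e^{i c\theta(\mu_k-\mu_l)},
\end{equation}
where $c$ is the prefactor in the exponent of $U(\theta)$. The differences $\mu_k-\mu_l$ lie in $\{0,\pm\lambda,\pm 2\lambda\}$, and $f$ is real because $\hat O$ is Hermitian. Expanding around the current point $\theta$, with $\Omega_1,\Omega_2=c\lambda,2c\lambda$, I obtain
\begin{equation}
f(\theta+s)=A_0+A_1\cos(\Omega_1 s)+B_1\sin(\Omega_1 s)+A_2\cos(\Omega_2 s)+B_2\sin(\Omega_2 s).
\end{equation}
Hence $\partial_\theta f=\Omega_1 B_1+\Omega_2 B_2$.

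Second, only the odd part in $s$ carries the derivative:
\begin{equation}
g(s):=f(\theta+s)-f(\theta-s)=2B_1\sin(\Omega_1 s)+2B_2\sin(\Omega_2 s).
\end{equation}
The constant and cosine terms cancel automatically. The task therefore reduces to finding weights $\alpha,\beta$ and shifts $s_1,s_2$ with $\alpha g(s_1)+\beta g(s_2)=\Omega_1B_1+\Omega_2B_2$ for all $B_1,B_2$. This is a $2\times 2$ linear system in $(\alpha,\beta)$:
\begin{equation}
2\alpha\sin(\Omega_1 s_1)+2\beta\sin(\Omega_1 s_2)=\Omega_1,\qquad 2\alpha\sin(\Omega_2 s_1)+2\beta\sin(\Omega_2 s_2)=\Omega_2.
\end{equation}
The system is solvable whenever the sine matrix is nonsingular. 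With the stated shifts $s_1=\pi/(4\lambda)$ and $s_2=\pi/(2\lambda)$, I will verify directly that $\alpha=\lambda$ and $\beta=\tfrac{\lambda}{2}(1-\sqrt2)$ solve it.

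The main obstacle is matching the normalization convention, so that the frequencies come out as $(\Omega_1,\Omega_2)=(\lambda,2\lambda)$. A quick check shows the constants work exactly in that case. With $s_1=\pi/(4\lambda)$ one gets $g(s_1)=\sqrt2 B_1+2B_2$, and with $s_2=\pi/(2\lambda)$ one gets $g(s_2)=2B_1$. Then
\begin{equation}
\tfrac{\lambda}{2}\bigl[2(\sqrt2B_1+2B_2)+(1-\sqrt2)\,2B_1\bigr]=\lambda B_1+2\lambda B_2,
\end{equation}
which is exactly $\partial_\theta f$.

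With the literal factor $\theta/2$ and eigenvalues $\pm\lambda$, the frequencies would instead be $(\lambda/2,\lambda)$. In that case the $B_1$ coefficient comes out as $4\sin(\pi/8)+\sqrt2-2\neq 1$, so the identity fails. I would therefore state the lemma for frequencies $\{\lambda,2\lambda\}$: that is, $U(\theta)=e^{-i\theta H}$ with spectrum $\{\pm\lambda,0\}$, or equivalently $e^{-i\theta H/2}$ with spectrum $\{\pm2\lambda,0\}$. The proof is then the coefficient verification above.
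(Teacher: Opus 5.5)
Your proposal is correct. The paper gives no proof of its own and only cites \cite{wierichs2022general}, and your argument (expand $f$ as a trigonometric polynomial in the eigenvalue-difference frequencies, then fit the odd part with two shifts) is the standard route of that reference.

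Your diagnosis of the normalization is also right. With $U(\theta)=e^{-i\theta H/2}$ and spectrum $\{\pm\lambda,0\}$, the frequencies are $\lambda/2$ and $\lambda$, and the $B_1$ coefficient is off by the factor $4\sin(\pi/8)+\sqrt2-2\approx 0.945$. So the lemma as printed is false, and it holds only for $U(\theta)=e^{-i\theta H}$, or equivalently for spectrum $\{\pm2\lambda,0\}$ under the $\theta/2$ convention.
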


The theorem \ref{thm:measure} and lemma \ref{lemma:phase_shift_rule} are used for guiding us to directly estimate the terms in $M$ and $V$.

\begin{theorem}[{\bf{estimation of $M_{j,k}^R$}}]\label{thm:M_R}
 Given the controllable Hamiltonian $H(\bm{\theta},\tau)=\sum_{j=1}^m\bm{u}_j(\bm{\theta},\tau)H_j$, where $\{H_j\}_{j=1}^m$ are Pauli strings. The real part of $M_{i,j}=\frac{\partial\left<\psi(\bm{\theta})\right|}{\partial\theta_j}\frac{\partial\left|\psi(\bm{\theta})\right>}{\partial\theta_k}$ can be estimate by
\begin{equation}
  M_{j,k}^R  \approx -\frac{T^2}{N^2}\sum_{p=1}^N\sum_{q=1}^N \frac{\partial{\bm{u}_j}(\tau_p)}{\partial{\theta_j}}\frac{\partial{\bm{u}_k}(\tau_q)}{\partial{\theta_k}}\left(P_{j,k}^{+} - P_{j,k}^{-}\right),
\end{equation}
where
 \begin{equation}\label{eq:P}
 P_{j,k}^{\pm} = p(M_{H_j}=\pm1)\langle H_k\rangle_{\pm}.
 \end{equation}
and $\tau_p, \tau_q$ are uniformly sampled from $[0,T]$, $p(M_{H_j}=\pm 1)=\|\frac{1}{2}(I\pm H_j)|\psi(\tau_p)\rangle\|^2$ refers to the probability of getting the result $\pm 1$ via performing projective measurement $\mathcal{M}_{H_j}$ on $|\psi(\tau_p)\rangle$. $\langle H_k\rangle_{\pm}$ is the expectation value of $H_k$ over the state $|\psi(\tau_q)^{\pm}\rangle=U_{\bm{\theta}}(\tau_q,\tau_p)|\psi_{\pm}\rangle$ where $U_{\bm{\theta}}(\tau_q,\tau_p)$ is analog quantum evolution from $\tau_p$ to $\tau_q$ and $|\psi_{\pm}\rangle$ is the post projected state of $\mathcal{M}_{H_j}$. 
\end{theorem}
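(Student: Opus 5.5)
We use Theorem~\ref{thm:deravative_evo_opt} to write each partial derivative of the trial state as a time integral, then identify the resulting integrand with the quantity estimated by Theorem~\ref{thm:measure}. Monte Carlo sampling of the time integrals then gives the stated estimator.

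\textbf{Step 1: derivative of the state.} Since $|\psi(\bm{\theta})\rangle=U_{\bm{\theta}}(T,0)|\psi_0\rangle$, Theorem~\ref{thm:deravative_evo_opt} gives
\begin{equation*}
\partial_{\theta_j}|\psi\rangle=-i\int_0^T \partial_{\theta_j}\bm{u}_j(\tau)\,U_{\bm{\theta}}(T,\tau)H_jU_{\bm{\theta}}(\tau,0)|\psi_0\rangle\,d\tau .
\end{equation*}

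\textbf{Step 2: the double integral for $M_{j,k}$.} Taking the product of the bra derivative in $\theta_j$ and the ket derivative in $\theta_k$ gives
\begin{equation*}
M_{j,k}=\int_0^T\!\!\int_0^T \partial_{\theta_j}\bm{u}_j(\tau_p)\,\partial_{\theta_k}\bm{u}_k(\tau_q)\,\langle\psi_0|U^\dagger(\tau_p,0)H_jU(\tau_p,\tau_q)H_kU(\tau_q,0)|\psi_0\rangle\,d\tau_p\,d\tau_q .
\end{equation*}
The factor $(+i)(-i)$ contributes $+1$, and the two outer factors $U(T,\cdot)$ combine into $U(\tau_p,\tau_q)$ by the group property.

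Write $|\psi(\tau)\rangle=U(\tau,0)|\psi_0\rangle$ and $W=U_{\bm{\theta}}(\tau_q,\tau_p)$. The integrand becomes $\langle\psi(\tau_p)|H_jW^\dagger H_kW|\psi(\tau_p)\rangle$. Because all operators are Hermitian, this is the complex conjugate of $\langle\psi(\tau_p)|W^\dagger H_kWH_j|\psi(\tau_p)\rangle$, and the two have the same real part.

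\textbf{Step 3: apply the measurement protocol.} Theorem~\ref{thm:measure}, with $G=H_j$, $U=H_k$ and state $|\psi(\tau_p)\rangle$, gives the real part of the integrand as $P^+_{j,k}-P^-_{j,k}$. Here $p(M_{H_j}=\pm1)$ is the measurement probability and $\langle H_k\rangle_\pm$ is evaluated on $W|\psi_\pm\rangle$, exactly as defined in Eq.~\eqref{eq:P}.

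\textbf{Step 4: Monte Carlo sampling.} Since the weights $\partial_{\theta}\bm{u}$ are real, taking the real part commutes with the integral. Replacing $\int_0^T\!\int_0^T$ by $\frac{T^2}{N^2}\sum_{p,q}$ over uniformly sampled times gives an estimator that is unbiased with variance $O(1/N)$.

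\textbf{Main obstacle: the overall sign.} Step 2 as written yields a $+$ sign, whereas the statement has $-T^2/N^2$. The first thing to check is the source of this discrepancy, since it is easy to get wrong.
\begin{itemize}
\item If $M_{j,k}$ is read with the bra derivative $\partial_{\theta_j}\langle\psi|$ taken literally as the conjugate of $-i(\cdots)$, the factor is $(i)(-i)=1$, confirming the $+$ sign.
\item If, for $\tau_q<\tau_p$, the backward evolution $U(\tau_q,\tau_p)=U(\tau_p,\tau_q)^\dagger$ is absorbed with a different ordering, the sign is unchanged.
\item The remaining candidate is a different phase convention, namely the paper's $H(\bm{\theta},t)$ versus $-H$, or the convention for which post-measurement branch is labelled $+$.
\end{itemize}
I would fix the convention consistently and reconcile the sign there, noting that a global sign does not affect the structure of the estimator. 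The rest of the argument is routine.

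A secondary technical point concerns ordering: when $\tau_q<\tau_p$, $W$ is a backward-in-time evolution. Theorem~\ref{thm:measure} holds for any unitary $W$, so the identity still holds, but physically this requires reversed analog evolution or a relabelling $p\leftrightarrow q$ with symmetrization of the real part. For the symmetrization I would use the fact that $\Re M$ is symmetric in $(j,k)$.
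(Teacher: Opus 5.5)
\textbf{Same route as the paper; the sign discrepancy is an error in the statement.} Your argument follows the paper's proof step for step. You use Theorem~\ref{thm:deravative_evo_opt}, form the double time integral, and conjugate the integrand into the form $\Re\langle\psi(\tau_p)|W^\dagger H_k W H_j|\psi(\tau_p)\rangle$ with $W=U_{\bm{\theta}}(\tau_q,\tau_p)$. You then apply Theorem~\ref{thm:measure} with $G=H_j$ and finish with Monte Carlo sampling.

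The sign you flag as the main obstacle is not a gap in your reasoning. Your $+$ is correct and the statement's $-$ is wrong. In the paper's proof the minus appears only because it writes $M^R_{j,k}=-\int\!\!\int\cdots$ directly after inserting Theorem~\ref{thm:deravative_evo_opt}. That amounts to attaching $-i$, rather than its conjugate $+i$, to the bra derivative.

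None of the conventions you list can produce the minus. Flipping the sign in the propagator gives $(-i)(+i)=1$ again. The branch labels are fixed by $p(M_{H_j}=\pm1)=\|\tfrac12(I\pm H_j)|\psi(\tau_p)\rangle\|^2$.

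A convention-free check settles it. In general $M_{j,j}=\|\partial_{\theta_j}\psi\|^2\ge 0$. Take $H(\theta,\tau)=\theta H_1$ with $H_1$ a Pauli string. The post-measurement states are $H_1$ eigenstates and remain so under the evolution, so $\langle H_1\rangle_\pm=\pm1$ and $P^+_{1,1}-P^-_{1,1}=p_++p_-=1$. The stated estimator therefore returns $-T^2$, while $M_{1,1}=T^2$. You should simply state the result with prefactor $+T^2/N^2$.

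For context, the paper's estimators of $V^I_j$ and $V^R_j$ carry the same flipped sign. So the systems $M^R\dot{\bm{\theta}}=V^I$ and $M^R\dot{\bm{\theta}}=-V^R$, and hence the simulated dynamics, are unaffected. Your remark that a global sign is harmless is true of the parameter flow, but not of the theorem as a claim about $M^R_{j,k}$.

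\textbf{The ordering issue.} Your point about $\tau_q<\tau_p$ is something the paper's proof ignores. Its pseudo-code samples with $\tau_p<\tau_q$ while keeping $H_j$ as the measured operator. That does not by itself account for the region $\tau_p>\tau_q$ when $j\neq k$.

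The clean fix is not the symmetry of $\Re M$ in $(j,k)$, which is a statement about the whole integral rather than each half of the domain. What you need is an exact pointwise rewriting that anchors at the earlier time. The integrand equals $\langle\psi(\tau_q)|\widetilde{W}^\dagger H_j \widetilde{W} H_k|\psi(\tau_q)\rangle$ with $\widetilde{W}=U_{\bm{\theta}}(\tau_p,\tau_q)$. So for $\tau_p>\tau_q$ you apply Theorem~\ref{thm:measure} with $G=H_k$ measured at $\tau_q$, evolve forward, and read out $H_j$ at $\tau_p$. Only forward analog evolution is then required.
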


\begin{proof}
According to Theorem~\ref{thm:deravative_evo_opt}, $\frac{\partial{|\psi(\bm{\theta})\rangle}}{\partial{\theta_j}}$ can be expressed like,
\begin{equation}
\frac{\partial{|\psi(\bm{\theta})}\rangle}{\partial{\theta_j}} = -i\int_{0}^T
d\tau \frac{\partial{\bm{u}_j}}{\partial{\theta_j}} \cdot U_{\bm{\theta}}(T,\tau) H_j U_{\bm{\theta}}(\tau,0) |\psi_0\rangle.
\end{equation}
Thus, we have 
\begin{align}
M_{j,k}^R  &=  \Re\left[\frac{\partial\left<\psi(\bm{\theta})\right|}{\partial\theta_j}\frac{\partial\left|\psi(\bm{\theta})\right>}{\partial\theta_k}\right],\\
& = - \int_{0}^T\int_{0}^T d\tau d\tau' \frac{\partial{\bm{u}_j}(\tau)}{\partial{\theta_j}}\frac{\partial{\bm{u}_k}(\tau')}{\partial{\theta_k}} \cdot \Re\left[ \langle \psi_0| U_{\bm{\theta}}(0,\tau)H_j U_{\bm{\theta}}(\tau,\tau')H_k U_{\bm{\theta}}(\tau',0)|\psi_0\rangle \right],
\end{align}
Let $|\psi(\tau)\rangle = U_{\bm{\theta}}(\tau,0)|\psi_0\rangle$, since $\Re[P] = \Re[P^\dagger]$ where $P\in \mathbb{C}^{n\times n}$, $U_{\bm{\theta}}(\tau,\tau') = U_{\bm{\theta}}^{\dagger}(\tau',\tau)$ and Theorem~\ref{thm:measure}, thus we have
\begin{align}
M_{j,k}^R  &= - \int_{0}^T\int_{0}^T d\tau d\tau' \frac{\partial{\bm{u}_j}(\tau)}{\partial{\theta_j}}\frac{\partial{\bm{u}_k}(\tau')}{\partial{\theta_k}} \cdot \Re\left[ \langle \psi(\tau)|U_{\bm{\theta}}(\tau,\tau')H_k U(\tau',\tau)H_j|\psi(\tau)\rangle \right],\\
& = - \int_{0}^T\int_{0}^T d\tau d\tau' \frac{\partial{\bm{u}_j}(\tau)}{\partial{\theta_j}}\frac{\partial{\bm{u}_k}(\tau')}{\partial{\theta_k}} \cdot \left( P_{j,k}^{+} - P_{j,k}^{-}\right),
\end{align}
where 
\begin{equation}
P_{j,k}^{\pm} =  p(M_{H_j}=\pm 1)\langle H_k\rangle_{M_{H_j}=\pm 1}.
\end{equation}
To approximate the integration over $[0,T]$ via Monte Carlo method, we have
\begin{equation}
M_{j,k}^R \approx -\frac{T^2}{N^2}\sum_{p=1}^N\sum_{q=1}^N \frac{\partial{\bm{u}_j}(\tau_p)}{\partial{\theta_j}}\frac{\partial{\bm{u}_k}(\tau_q)}{\partial{\theta_k}}\cdot \left[p(M_{H_j}=+1)\langle H_k\rangle_{M_{H_j}=+1}-p(M_{H_j}=-1)\langle H_k\rangle_{M_{H_j}=-1}\right].
\end{equation}
\end{proof}

\begin{theorem}[{\bf{estimation of $V_{j}^R$}}]\label{thm:V_R}
 Given the target Hamiltonian $H$ and the controllable Hamiltonian $H(\bm{\theta},\tau)=\sum_{j=1}^m \bm{u}_j(\bm{\theta},\tau)H_j$, where $\{H_j\}_{j=1}^m$ are Pauli strings. Each $\tau_p$ is independently uniformly drawn from $[0,T]$. The real part of $V_{j}=\frac{\partial\left\langle\psi(\bm{\theta})\right|}{\partial\theta_{j}}H\left|\psi(\bm{\theta})\right\rangle$ can be estimate by
 \begin{equation}
 V_{j}^R \approx -\frac{T}{2N}\sum_{p=1}^N \frac{\partial{\bm{u}_j}(\tau_p)}{\partial{\theta_j}} \cdot \left(Q_{j}^{+}(\tau_p) - Q_{j}^{-}(\tau_p)\right), 
 \end{equation}
 where
 \begin{equation}
 Q_{j}^{\pm}(\tau_p) = \langle \psi^{\pm}|H|\psi^{\pm}\rangle.
 \end{equation}
 and
 \begin{equation}
 |\psi^{\pm}\rangle = U_{\bm{\theta}}(T,\tau_p) e^{-i(\pm\frac{1}{4})\pi H_j}U(\tau_p,0)|\psi_0\rangle.
 \end{equation}
\end{theorem}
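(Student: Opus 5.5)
The plan is to reduce $V_j^R$ to the derivative of an energy expectation, write that derivative as a time integral using Theorem~\ref{thm:deravative_evo_opt}, evaluate the integrand with a two-term Pauli shift rule (the $\lambda=1$ case of the idea behind Lemma~\ref{lemma:phase_shift_rule}), and finally replace the integral by a Monte Carlo average.

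\textbf{Step 1 (reduce to a single commutator per time).} Since $\bm{\theta}$ is real and $H$ is Hermitian, $V_j^R=\Re\big[\partial_{\theta_j}\langle\psi(\bm{\theta})|\,H|\psi(\bm{\theta})\rangle\big]$, which equals $\tfrac12\,\partial_{\theta_j}\langle\psi(\bm{\theta})|H|\psi(\bm{\theta})\rangle$. Theorem~\ref{thm:deravative_evo_opt} gives, as in the proof of Theorem~\ref{thm:M_R},
\[
\frac{\partial\langle\psi(\bm{\theta})|}{\partial\theta_j}= i\int_0^T d\tau\, \frac{\partial \bm{u}_j(\tau)}{\partial\theta_j}\,\langle\psi(\tau)|H_j\,U_{\bm{\theta}}(\tau,T).
\]
Define $\tilde H(\tau)=U_{\bm{\theta}}(T,\tau)^\dagger H\, U_{\bm{\theta}}(T,\tau)$ and $z=\langle\psi(\tau)|H_j\tilde H(\tau)|\psi(\tau)\rangle$. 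Then
\[
V_j^R=\int_0^T d\tau\,\frac{\partial \bm{u}_j(\tau)}{\partial\theta_j}\,\Re[i z].
\]
Using $\Re[iz]=\tfrac{i}{2}(z-z^*)$, this becomes
\[
V_j^R=\int_0^T d\tau\,\frac{\partial \bm{u}_j(\tau)}{\partial\theta_j}\,\frac{i}{2}\,\langle\psi(\tau)|[H_j,\tilde H(\tau)]|\psi(\tau)\rangle .
\]

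\textbf{Step 2 (shift identity for a Pauli generator).} Because $H_j$ is a Pauli string, $H_j^2=I$, so $e^{\mp i\frac{\pi}{4}H_j}=\tfrac{1}{\sqrt2}(I\mp iH_j)$. Direct expansion then gives, for any observable $O$,
\[
e^{\pm i\frac{\pi}{4}H_j}\,O\,e^{\mp i\frac{\pi}{4}H_j}=\tfrac12\big(O+H_jOH_j\pm i[H_j,O]\big).
\]
Subtracting the two signs yields $i[H_j,O]$. Taking $O=\tilde H(\tau)$ and the expectation in $|\psi(\tau)\rangle$ turns the left-hand side into exactly $Q_j^{+}(\tau)-Q_j^{-}(\tau)$, where $|\psi^\pm\rangle=U_{\bm{\theta}}(T,\tau)e^{\mp i\frac{\pi}{4}H_j}|\psi(\tau)\rangle$ as in the statement. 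This is the $\lambda=1$ specialization of Lemma~\ref{lemma:phase_shift_rule}, for which the two-term rule is already exact, so the second-order correction terms there drop out. Hence
\[
V_j^R=\pm\frac12\int_0^T d\tau\,\frac{\partial \bm{u}_j(\tau)}{\partial\theta_j}\big(Q_j^+(\tau)-Q_j^-(\tau)\big),
\]
with an overall sign still to be fixed.

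\textbf{Step 3 (Monte Carlo).} Sample $\tau_p$ i.i.d.\ uniformly on $[0,T]$ and replace $\int_0^T g(\tau)\,d\tau$ by $\tfrac{T}{N}\sum_{p=1}^N g(\tau_p)$. This estimator is unbiased and its variance is bounded because $|\partial\bm{u}_j|$ and $\|H\|$ are bounded. The result is the stated prefactor $\tfrac{T}{2N}$.

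\textbf{Main obstacle: the overall sign.} My direct computation gives a $+$ sign, while the statement has $-$. The sign depends on three conventions: which of $\langle\partial\psi|$ or $|\partial\psi\rangle$ enters $V$, whether the shift $e^{-i(\pm\frac14)\pi H_j}$ labels $\pm$ the way I assumed, and the sign in Theorem~\ref{thm:deravative_evo_opt}. I will fix the sign by checking a one-qubit example: take $H_0=0$, $H_c=u\,X$ with $u'=1$, and $H=Z$, and compare both sides explicitly. If the discrepancy persists, it means the paper labels the shifted states in the opposite order, and I will state the result with the convention made explicit.
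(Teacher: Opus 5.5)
Your argument follows the paper's proof step for step. It uses Theorem~\ref{thm:deravative_evo_opt} to write $\langle\partial_{\theta_j}\psi|$ as a time integral, takes the real part as the commutator piece $\frac{i}{2}\langle[H_j,\tilde H(\tau)]\rangle$, applies a shift rule to convert that into $\frac12(Q_j^+-Q_j^-)$, and finishes with uniform Monte Carlo sampling. Your direct derivation of the shift identity from $H_j^2=I$ is a cleaner justification than the paper's appeal to the four-term Lemma~\ref{lemma:phase_shift_rule}. Drop the remark that it is the $\lambda=1$ case of that lemma with the correction terms dropping out: it is not accurate, since for a Pauli generator the $(1-\sqrt2)$ pair there does not vanish, and nothing in your proof needs it.

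The point to fix is the sign, which you should not leave open. Your Steps 1 and 2 already determine it. Combining $\Re[iz]=\frac{i}{2}\langle[H_j,\tilde H]\rangle$ with $Q_j^+-Q_j^-=i\langle[H_j,\tilde H]\rangle$ gives
\[
V_j^R=+\frac12\int_0^T d\tau\,\frac{\partial\bm{u}_j(\tau)}{\partial\theta_j}\bigl(Q_j^+(\tau)-Q_j^-(\tau)\bigr).
\]
No relabeling can rescue the stated minus sign, because the statement fixes $|\psi^\pm\rangle$ with $e^{-i(\pm\frac14)\pi H_j}$, exactly the convention you used.

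Your proposed one-qubit check confirms this. Take $H_c=\theta X$ (so $\partial_\theta\bm{u}=1$), $H=Z$, and $|\psi_0\rangle=|0\rangle$. Then $V^R=\frac12\partial_\theta\cos(2\theta T)=-T\sin(2\theta T)$ and $Q^\pm=\cos(2\theta T\pm\frac{\pi}{2})=\mp\sin(2\theta T)$. Hence $\frac12\int_0^T(Q^+-Q^-)\,d\tau=-T\sin(2\theta T)=V^R$, whereas the stated estimator converges to $+T\sin(2\theta T)$.

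The paper's own proof arrives at the same $+\frac12$ integral and then writes $-\frac{T}{2N}$ in the Monte Carlo line without justification. The minus sign in the statement is therefore an error in the paper, not a convention you misread. State the result as $V_j^R\approx+\frac{T}{2N}\sum_{p}\partial_{\theta_j}\bm{u}_j(\tau_p)\bigl(Q_j^+(\tau_p)-Q_j^-(\tau_p)\bigr)$, or equivalently as the stated form with $Q_j^+$ and $Q_j^-$ interchanged.
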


\begin{proof}
According to Theorem~\ref{thm:deravative_evo_opt} and let $|\psi(\tau) \rangle = U_{\bm{\theta}}(\tau,0)|\psi_0\rangle$ thus we have,
\begin{align}
\frac{\partial\left\langle\psi(\bm{\theta})\right|}{\partial\theta_{j}}H|\psi(\bm{\theta})\rangle & = \int_{0}^T d\tau \frac{\partial \bm{u}_j(\tau)}{\partial \theta_j}\langle \psi(\tau)|(iH_j)U_{\bm{\theta}}(\tau,T)HU_{\bm{\theta}}(T,\tau)|\psi(\tau)\rangle
\end{align}
Let $A = \frac{\partial\left\langle\psi(\bm{\theta}_\tau)\right|}{\partial\theta_{j}}H|\psi(\bm{\theta}_\tau)\rangle $ and according to lemma \ref{lemma:phase_shift_rule}, we have
\begin{align}
\Re[A] &= \frac{1}{2}(A+A^\dagger)\\
& = \frac{1}{2}\int_{0}^T d\tau \frac{\partial \bm{u}_j(\tau)}{\partial \theta_j}\cdot \left(Q_{j}^{+}(\tau) - Q_{j}^{-}(\tau)\right)
\end{align}
where
\begin{equation}
Q_{j}^{\pm}(\tau) = \langle \psi(\tau)| e^{i(\pm\frac{1}{4})\pi H_j}U_{\bm{\theta}}(\tau,T)\cdot H \cdot U_{\bm{\theta}}(T,\tau) e^{-i(\pm\frac{1}{4})\pi H_j}|\psi(\tau)\rangle.
\end{equation}
Then, we can uniformally sample $\{\tau_p\}_{p=1}^N$ to approximate the integration and get the estimation of $V_j^R$ by
\begin{equation}
V_{j}^R  \approx -\frac{T}{2N}\sum_{p=1}^N \frac{\partial{\bm{u}_j}(\tau_p)}{\partial{\theta_j}} \cdot (Q_{j}^{+}(\tau_p)) - Q_{j}^{-}(\tau_p)).
\end{equation}
\end{proof}

\begin{theorem}[estimation of $V_{j}^I$]\label{thm:V_I}
Given the problem Hamiltonian $H$ and the controllable Hamiltonian $H(\bm{\theta}, \tau)=\sum_{j=1}^m \bm{u}_j(\bm{\theta},\tau)H_j$, where $\{H_j\}_{j=1}^m$ are Pauli strings. Each $\tau_p$ is independently and  uniformally sampled from $[0,T]$. The imaginary part of $V_{j}=\frac{\partial\left\langle\psi(\bm{\theta})\right|}{\partial\theta_{j}}H\left|\psi(\bm{\theta})\right\rangle$ can be estimated by
\begin{equation}
V_{j}^I \approx -\frac{T}{N}\sum_{p=1}^N \frac{\partial{\bm{u}_j}(\tau_p)}{\partial{\theta_j}} \cdot \left[p(M_{H_j}=+1)\langle H \rangle_{M_{H_j}=+1}-p(M_{H_j}=-1)\langle H \rangle_{M_{H_j}=-1}\right], 
\end{equation}

\end{theorem}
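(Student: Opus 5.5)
We would mirror the proof of Theorem~\ref{thm:M_R}, replacing the second derivative factor by the target Hamiltonian $H$ evaluated at the final time $T$.

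\textbf{Step 1: write $V_j$ as an integral.} Apply Theorem~\ref{thm:deravative_evo_opt} to $|\psi(\bm{\theta})\rangle=U_{\bm{\theta}}(T,0)|\psi_0\rangle$. Taking the adjoint gives
\begin{equation}
V_j=\int_0^T d\tau\,\frac{\partial \bm{u}_j(\tau)}{\partial\theta_j}\, i\,\langle\psi(\tau)|H_j\,U_{\bm{\theta}}(\tau,T)\,H\,U_{\bm{\theta}}(T,\tau)|\psi(\tau)\rangle ,
\end{equation}
where $|\psi(\tau)\rangle=U_{\bm{\theta}}(\tau,0)|\psi_0\rangle$. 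The pulse derivatives are real, so
\begin{equation}
\Im[V_j]=\int_0^T d\tau\,\frac{\partial \bm{u}_j}{\partial\theta_j}\,\Re\big[\langle\psi(\tau)|H_jW^\dagger HW|\psi(\tau)\rangle\big],
\qquad W=U_{\bm{\theta}}(T,\tau),
\end{equation}
using $\Im[iz]=\Re[z]$.

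\textbf{Step 2: move into the form of Theorem~\ref{thm:measure}.} For Hermitian $H_j$ and $W^\dagger HW$, we have $\Re[\langle\psi|H_jW^\dagger HW|\psi\rangle]=\Re[\langle\psi|W^\dagger HWH_j|\psi\rangle]$. This is exactly the quantity in Theorem~\ref{thm:measure} with $G=H_j$ and $U=H$. It therefore equals $p(M_{H_j}=+1)\langle H\rangle_{+}-p(M_{H_j}=-1)\langle H\rangle_{-}$, where the conditional expectations are taken on the post-measurement states evolved from $\tau$ to $T$.

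One caveat: $H$ is a sum of Pauli strings rather than a single one. Theorem~\ref{thm:measure} is linear in $U$, so we apply it termwise and sum, or directly note that the identity $\Re\langle\psi|W^\dagger OWG|\psi\rangle=\tfrac14\langle(I+G)W^\dagger OW(I+G)\rangle-\tfrac14\langle(I-G)W^\dagger OW(I-G)\rangle$ holds for any Hermitian $O$ whenever $G^2=I$. That direct check is a two-line expansion, and I would include it.

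\textbf{Step 3: sign and Monte Carlo.} Track the overall sign from $\partial\langle\psi|=+i\int(\cdots)$ against the stated $-T/N$ prefactor; a sign convention on $\Im$ (conjugation placement in $V_j$) is needed to match the statement. Then replace $\int_0^T d\tau\, g(\tau)$ by $\frac{T}{N}\sum_p g(\tau_p)$ with uniform $\tau_p$, which is unbiased.

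\textbf{Main obstacle.} The main obstacle is the sign bookkeeping in Step 3. Depending on whether $\partial_{\theta_j}\langle\psi|$ carries $+i$ or $-i$, the result comes out as $\pm\frac{T}{N}\sum(\cdot)$. I would fix this carefully against the convention already used in Theorem~\ref{thm:M_R}, where the product of the two factors $\pm i$ produced the minus sign. Here only one factor $i$ appears, so taking the imaginary part supplies the remaining sign.
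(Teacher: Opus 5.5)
Your Steps~1--2 follow the paper's route: Theorem~\ref{thm:deravative_evo_opt}, then trading the imaginary part of a factor $\pm i$ for a real part, then Theorem~\ref{thm:measure} with $G=H_j$ and $U=H$, then Monte Carlo. Your remark that Theorem~\ref{thm:measure} must be extended to the non-Pauli observable $H$ fills a step the paper skips; linearity or the identity $\tfrac14[(I+G)O(I+G)-(I-G)O(I-G)]=\tfrac12(GO+OG)$ does it. The gap is in Step~3, and it cannot be closed the way you intend. Your Step~1 already fixes the sign: $\partial_{\theta_j}\langle\psi|$ carries $+i$ and $\Im[iz]=\Re[z]$, so
\begin{equation*}
V_j^I=+\int_0^T d\tau\,\frac{\partial \bm{u}_j(\tau)}{\partial\theta_j}\Big[p(M_{H_j}=+1)\langle H\rangle_{+}-p(M_{H_j}=-1)\langle H\rangle_{-}\Big].
\end{equation*}
The unbiased estimator therefore has prefactor $+T/N$. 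Taking the imaginary part supplies no extra sign. No convention reproduces the stated $-T/N$ while keeping $V_j=\partial_{\theta_j}\langle\psi|\,H|\psi\rangle$. A quick check: take $H_c=\theta H_1$ with $\partial u_1/\partial\theta=1$ and target $H=H_1$. Then $V_1=iT\langle\psi|H_1^2|\psi\rangle=iT$, so $V_1^I=T$. The post-measurement states are $H_1$ eigenstates that the evolution preserves, so $\langle H\rangle_\pm=\pm1$ and the bracket equals $p_++p_-=1$. The stated formula then returns $-T$.

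The paper's one-line proof gets its minus sign from $\Im[-iS]=-\Re[S]$. That is, it attaches the factor $-i$ of $\partial_{\theta_j}|\psi\rangle$ to the bra, which computes $\Im\langle\psi|H\,\partial_{\theta_j}\psi\rangle=-V_j^I$. This conflicts with the $+i$ the paper itself writes in the proof of Theorem~\ref{thm:V_R}. Your reading of Theorem~\ref{thm:M_R} is also off: there $(+i)(-i)=+1$, and the minus sign in the paper's $M^R_{j,k}$ is the same dropped conjugation. In the example above, the paper's formula gives $M^R_{11}=-T^2$, whereas $M_{11}=\|\partial_\theta\psi\|^2=T^2$. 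Since $M^R$ and $V^I$ are flipped together, $M^R\dot{\bm{\theta}}=V^I$ and the simulated dynamics are unaffected, but the statement as written is off by an overall sign. The right ending for your Step~3 is to report $+T/N$, with $M^R$ likewise carrying $+T^2/N^2$, not to look for a convention that yields $-T/N$.
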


\begin{proof}
According to the fact that $\mathcal{I}[-i S] = - \mathcal{R}[S]$ for arbitrary complex matrix $S$ and theorem \ref{thm:measure}, we are also capable of utilizing Monte Carlo integration method via independently and uniformly drawing samples $\{\tau_p\}_{p=1}^N$ from $[0,T]$ to estimate $V_j^I$.
\end{proof}

\IncMargin{1em}
\begin{algorithm}
\SetKwData{Left}{left}\SetKwData{This}{this}\SetKwData{Up}{up}
\SetKwFunction{Union}{Union}\SetKwFunction{FindCompress}{FindCompress}
\SetKwInOut{Input}{input}\SetKwInOut{Output}{output}

\Input{Number of sampling $N$, controllable Hamiltonian $H(\bm{\theta},\tau)= \sum_{j=1}^m \bm{u}_j(\bm{\theta},\tau)H_j$ with $m$ control pulses, parameters of control pulses $\bm{\theta}$, evolution duration $T$, initial quantum state $|\psi_0\rangle$, index $j,k$.}
\Output{the estimation of $M_{j,k}^R$}
\BlankLine
\For{$p \leftarrow 1$ \KwTo $N$}
{
	\For{$q \leftarrow 1$ \KwTo $N$}
	{
		Independently sampling $\tau_p,\tau_q$, where $\tau_p < \tau_q$, from uniform distribution $U(0,T)$\;
		Evolve the initial state $|\psi_0\rangle$ from $0\rightarrow \tau_p$ to get state $|\psi(\tau_p)\rangle$\;
		Perform nondestructively projective measurement $\mathcal{M}_{H_j}$ on $|\psi(\tau_p)\rangle$, get $p(M_{H_j}=\pm 1)$ and associated post-state $|\psi_{\pm}\rangle$\;
		Evaluate state $|\psi_{\pm}\rangle$ from $\tau_p\rightarrow \tau_q$ to get states $|\psi(\tau_q)^{\pm}\rangle$\;
		Estimate the expectation value of $H_k$ over states $|\psi(\tau_q)^{\pm}\rangle$, i.e. $\langle H_k \rangle_{\pm}$ respectively.\;
		Calculate the $P_{j,k}^{\pm} = p(M_{H_j}=\pm1)\langle H_k\rangle_{\pm}$.
	}
}
$\tilde{M}_{j,k}^R \leftarrow -\frac{T^2}{N^2}\sum_{p=1}^N\sum_{q=1}^N \frac{\partial{\bm{u}_j}(\tau_p)}{\partial{\theta_j}}\frac{\partial{u_k}(\tau_q)}{\partial{\theta_k}}\cdot (P_{j,k}^{+} - P_{j,k}^{-})$.
\caption{Estimating matrix $M_{j,k}^R$}\label{algo_MR}\DecMargin{1em}
\end{algorithm}\DecMargin{1em}

\IncMargin{1em}
\begin{algorithm}[H]
\SetKwData{Left}{left}\SetKwData{This}{this}\SetKwData{Up}{up}
\SetKwFunction{Union}{Union}\SetKwFunction{FindCompress}{FindCompress}
\SetKwInOut{Input}{input}\SetKwInOut{Output}{output}

\Input{Number of sampling $N$, controllable Hamiltonian $H(\bm{\theta},\tau)= \sum_{j=1}^m \bm{u}_j(\bm{\theta},\tau)H_j$ with $m$ control pulses, parameters of control pulses $\bm{\theta}$, evolution duration $T$, initial quantum state $|\psi_0\rangle$, index $j,k$.}
\Output{the estimation of $V_{j}^R$}
\BlankLine
\For{$p \leftarrow 1$ \KwTo $N$}
{
		Independently sampling $\tau_p$ from uniform distribution $U(0,T)$\;
		Evolve the initial state $|\psi_0\rangle$ from $0\rightarrow t_p$ to get state $|\psi(\tau_p)\rangle$\;
		Perform $e^{-i(\pm\frac{1}{4})\pi H_j}$ on $|\psi(\tau_p)\rangle$\;
		Then perform $U_{\bm{\theta}}(T,\tau_p)$ to get $|\psi^{\pm}\rangle$\;
		Estimate the expectation value of $H$ over states $|\psi^{\pm}\rangle$, i.e. $Q_j^{\pm}$ respectively.
}
$\tilde{V}_{j}^R \leftarrow -\frac{T}{2N}\sum_{p=1}^N \frac{\partial{\bm{u}_j}(\tau_p)}{\partial{\theta_j}} \cdot \left(Q_{j}^{+}(\tau_p)) - Q_{j}^{-}(\tau_p)\right)$\;
\caption{Estimating matrix $V_{j}^R$}\label{algo_VR}\DecMargin{1em}
\end{algorithm}\DecMargin{1em}

\IncMargin{1em}
\begin{algorithm}
\SetKwData{Left}{left}\SetKwData{This}{this}\SetKwData{Up}{up}
\SetKwFunction{Union}{Union}\SetKwFunction{FindCompress}{FindCompress}
\SetKwInOut{Input}{input}\SetKwInOut{Output}{output}
\Input{Number of sampling $N$, controllable Hamiltonian $H(\bm{\theta},\tau)= \sum_{j=1}^m \bm{u}_j(\bm{\theta},\tau)H_j$ with $m$ control pulses, parameters of control pulses $\bm{\theta}$, evolution duration $T$, initial quantum state $|\psi_0\rangle$, index $j,k$.}
\Output{the estimation of $V_{j}^I$}
\BlankLine
	\For{$p \leftarrow 1$ \KwTo $N$}
	{
		Independently sampling $\tau_p$ from uniform distribution $U(0,T)$\;
		Evolve the initial state $|\psi_0\rangle$ from $0\rightarrow \tau_p$ to get state $|\psi(\tau_p)\rangle$\;
		Perform nondestructively projective measurement $\mathcal{M}_{H_j}$ on $|\psi(\tau_p)\rangle$, get $p(M_{H_j}=\pm 1)$ and associated post-state $|\psi_{\pm}\rangle$\;
		Estimate the expectation value of $H$ over states $|\psi(\tau_p)^{\pm}\rangle$, i.e. $\langle H \rangle_{\pm}$\;
		Calculate the $Q_{j}^{\pm}(\tau_p) = p(M_{H_j}=\pm1)\langle H\rangle_{\pm}$.
	}
$\tilde{V}_{j}^I \leftarrow -\frac{T}{2N}\sum_{p=1}^N \frac{\partial{\bm{u}_j}(\tau_p)}{\partial{\theta_j}} \cdot \left(Q_{j}^{+}(\tau_p)) - Q_{j}^{-}(\tau_p)\right)$
\caption{Estimating matrix $V_{j}^I$}\label{algo_VR}\DecMargin{1em}
\end{algorithm}\DecMargin{1em}

\section{Expressivity and entanglement capability of parametrized analog Hamiltonian}

\subsection{Expressivity}
In the context of variational quantum algorithm, the expressivity characterizes the capability of an ansatz to represent a broad class of quantum states or transformations. Specifically, it quantifies how extensively the circuit can explore the Hilbert space as its tunable parameters are varied. A highly expressive ansatz can approximate a large and diverse set of quantum states. Here, we follow the definition of the expressivity proposed in Ref.~\cite{sim2019expressibility} for parametrized analog Hamiltonian.

\begin{definition}[Expressibility of a Variational Quantum Ansatz in Ref.~\cite{sim2019expressibility}]
Let $\mathcal{U}(\bm{\theta})$ be a variational quantum ansatz with $p$ parameters. The ansatz induces a distribution $\hat{P}_{\mathcal{U}}(F)=p(F = |\langle \psi(\bm{\theta})| \psi(\bm{\theta}') \rangle|^2)$ over the fidelity $F = |\langle \psi(\bm{\theta})| \psi(\bm{\theta}') \rangle|^2$
where $\bm{\theta}$ and $\bm{\theta}'$ are independently and uniformally drawn from $\mathbb{R}^p$ and $|\psi(\bm{\theta})\rangle = \mathcal{U}(\bm{\theta}) |0\rangle$.
Let $P_{\mathrm{Haar}}(F)$ denote the fidelity distribution obtained by sampling pure states uniformly at random according to the Haar measure on $\mathcal{H} = (\mathbb{C}^2)^{\otimes n}$.  
The expressibility $\mathcal{E}(\mathcal{U})$ of the variational quantum ansatz is defined as the Kullback-Leibler (KL) divergence between $\hat{P}_{\mathcal{U}}(F)$ and $P_{\mathrm{Haar}}(F)$,
\begin{equation}
\mathcal{E}(\mathcal{U})=D_{\mathrm{KL}}\!\left(
\hat{P}_{\mathcal{U}}(F)\middle\|P_{\mathrm{Haar}}(F)\right).
\end{equation}
where $P_{\mathrm{Haar}}(F)$ can be analytically estimated by
\begin{equation}
    P_{\mathrm{Haar}}(F)= (N-1)(1-F)^{N-2}, N=2^n
\end{equation}
\end{definition}
A smaller value of $\mathcal{E}(\mathcal{U})$ signifies that the ensemble of states generated by the ansatz more closely approximates the Haar measure, indicating higher expressibility.

\subsection{Entanglement Capability}
The entanglement capability of a variational quantum ansatz is employed to quantify its ability to generate multipartite quantum correlations, which are essential for the expressive power and potential advantage of variational quantum algorithms. The entanglement capability can be characterized using the Meyer–Wallach global entanglement measure, which evaluates the average linear entropy of single-qubit reduced density matrices for a given pure state. This measure captures how strongly each individual qubit is entangled with the rest of the system and thus provides a meaningful indicator of global entanglement rather than pairwise correlations alone \cite{meyer2001global}. The entangling capability of an ansatz is then defined as the expectation value of the Meyer–Wallach measure over randomly sampled variational parameters that reflects the typical entanglement generated by the ansatz. 

\begin{definition}[Entangling Capability (Meyer--Wallach Measure)]
Let $\mathcal{U}(\bm{\theta})$ be a variational quantum ansatz with parameters $\bm{\theta}$ independently and uniformally drawn from $\mathbb{R}^p$. The ansatz prepares a pure quantum state
$|\psi(\bm{\theta})\rangle = \mathcal{U}(\bm{\theta}) |0\rangle$.
For qubit $j \in \{1,\dots,n\}$, the reduced density matrix $\rho_j(\bm{\theta})$ is denote by $ \rho_j(\bm{\theta}) = \operatorname{Tr}_{n\backslash j}\left[|\psi(\bm{\theta})\rangle\langle\psi(\bm{\theta})|\right]$ obtained by tracing out all qubits except $j$.  
The Meyer-Wallach global entanglement measure of the state, which quantifies the average linear entropy of the single-qubit reduced states, is defined as
\begin{equation}
Q(\bm{\theta})=\frac{2}{n}\sum_{j=1}^{n}\left(1 - \operatorname{Tr}\left[\rho_j(\bm{\theta})^2\right]
\right).
\end{equation}
The entangling capability of the variational quantum ansatz is defined as the expectation value over random variational parameters, 
\begin{equation}
\mathcal{C}_{\mathrm{ENT}}(\mathcal{U})= \mathbb{E}_{\bm{\theta} \sim P_{\bm{\theta}}}\left[\, Q(\bm{\theta}) \,\right].
\end{equation}
\end{definition}
The value of such entanglement capability is between $0$ and $1$. A higher entanglement capability indicates that the circuit more readily explores highly entangled regions of the Hilbert space, while limited entangling capability may restrict the reachable state manifold and impair the representational power of the ansatz.

\subsection{Analysis on the analog Hamiltonian}
In this section, we analyze the advantages of the parametrized analog Hamiltonian in terms of both expressibility and entangling capability. To this end, we compare the expressibility and entanglement capability of six different four-qubit ansatzes, including both analog and gate-based circuit constructions.
Ansatz 1 and ansatz 2 share the same underlying structure as the Hamiltonian, 
\begin{equation}
    H(t) = \sum_{v,i}f_{v,i}(t)\sigma_{v,i} + \sum_{i,j}h_{ij}(t)(\sigma_i^{+}\sigma_j^{-}+\sigma_i^{-}\sigma_j^{+})
\label{eq:H_transmon}
\end{equation}
They are composed of the control pulses $\sigma_{x_i}$, $\sigma_{z_i}$, and $(\sigma_i^{+}\sigma_j^{-} + \sigma_i^{-}\sigma_j^{+})$. For simplicity, the maximum amplitude of each control pulse is fixed to unity. The initial state is chosen as $|i\rangle^{\otimes 4}$, where $|i\rangle$ denotes the eigenstate of $\sigma_y$ with eigenvalue $+1$. The total evolution time of ansatz~1 is set to $2\pi$, whereas for ansatz~2 it is extended to $2n\pi$, with $n$ denoting the number of parameters controlling each pulse.
Ansatz 3 and 4 employ the same set of control pulses as ansatz 1 and 2, respectively, but differ in their temporal structure. The non-commuting control Hamiltonians are applied sequentially in time rather than simultaneously. Finally, ansatz 5 and ansatz 6 are gate-based and composed of standard CNOT and controlled-rotation gates. To ensure a fair comparison, the number of tunable parameters per layer in ansatz 5 and 6 is chosen to be comparable to the number of control pulses in ansatz 1 and 2. For each of the six ansatzes, we separately evaluate their expressibility and entanglement capability.

\begin{figure}[H]
    \centering
    \setkeys{Gin}{width=0.5\textwidth}
\subfloat[Ansatz 5
          \label{fig:ansatz5}]{\scalebox{1}{
            \begin{quantikz}
                \lstick{\( \ket{0} \)} & \gate{R_y} & \gate{R_z} & \ctrl{1} & \qw & \qw & \qw &\qw\\
                \lstick{\( \ket{0} \)} & \gate{R_y} & \gate{R_z} & \targ{}  & \gate{R_y} & \gate{R_z} & \ctrl{1} &\qw\\
                \lstick{\( \ket{0} \)} & \gate{R_y} & \gate{R_z} & \ctrl{1} & \gate{R_y} & \gate{R_z} & \targ{}  &\qw \\
                \lstick{\( \ket{0} \)} & \gate{R_y} & \gate{R_z} & \targ{}  & \qw     & \qw        & \qw  &\qw 
            \end{quantikz}
        }}
    \hfill
\subfloat[Ansatz 6
          \label{fig:ansatz6}]{\scalebox{1}{
            \begin{quantikz}
                \lstick{\( \ket{0} \)} & \gate{R_x} & \gate{R_z} & \ctrl{1} & \qw & \qw  \\
                \lstick{\( \ket{0} \)} & \gate{R_x} & \gate{R_z} & \gate{R_x}  & \ctrl{1} & \qw  \\
                \lstick{\( \ket{0} \)} & \gate{R_x} & \gate{R_z} & \ctrl{1} & \gate{R_x} & \qw   \\
                \lstick{\( \ket{0} \)} & \gate{R_x} & \gate{R_z} & \gate{R_x}  & \qw     & \qw    
            \end{quantikz}
        }}

\caption{\small{(a) and (b) are gate-based ansatz used to be the benchmark with variationl analog Hamiltonian.}}
\label{fig:common}
    \end{figure}

As a benchmark, we first evaluate the expressibility of two representative gate-based ansatzes, namely ansatz~5 and ansatz~6, by varying the number of layers from $1$ to $5$. In parallel, we investigate the expressibility of the analog Hamiltonian by increasing the number of control parameters per pulse also from $1$ to $5$. 
Since the analytic form of the fidelity distribution between random quantum states is unknown which makes direct estimation of the KL-divergence between the fidelity distributions is nontrivial. To enable a fair and consistent comparison across different ansatz, we adopt a uniform numerical procedure. For each ansatz, we randomly sample 40,000 pairs of parametrized quantum states, compute and sort their fidelities, and partition the data into 200 intervals. Within each interval, we estimate the probability density of the corresponding Haar-random fidelity distribution and subsequently evaluate the KL divergence. Similarly, we randomly selected 10,000 parameterized quantum states to calculate their average Meyer-Wallach measure. The expressibility and entanglement capacity comparison are presented in Fig.~\ref{fig:dkl} and Fig.~\ref{fig:ent}, respectively.

\begin{figure}[H]
    \centering
    \includegraphics[width=0.8\textwidth]{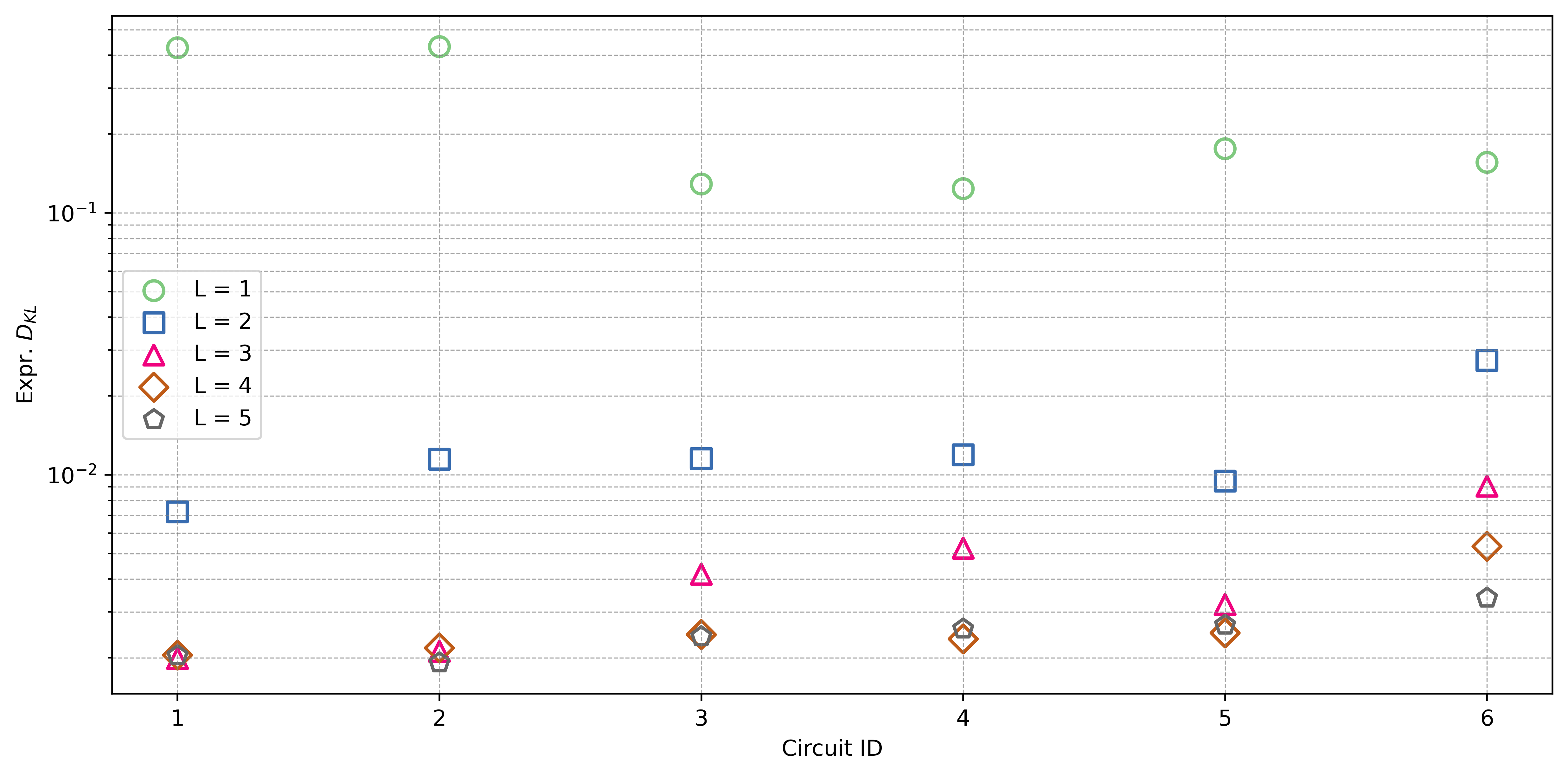}
    \caption{The KL divergence between the distribution of state fidelities of the ansatz 1-6 and the Haar random states. Ansatz with smaller KL divergence indicates higher expressivity.}
    \label{fig:dkl}
\end{figure}

\begin{figure}[H]
    \centering
    \includegraphics[width=0.8\textwidth]{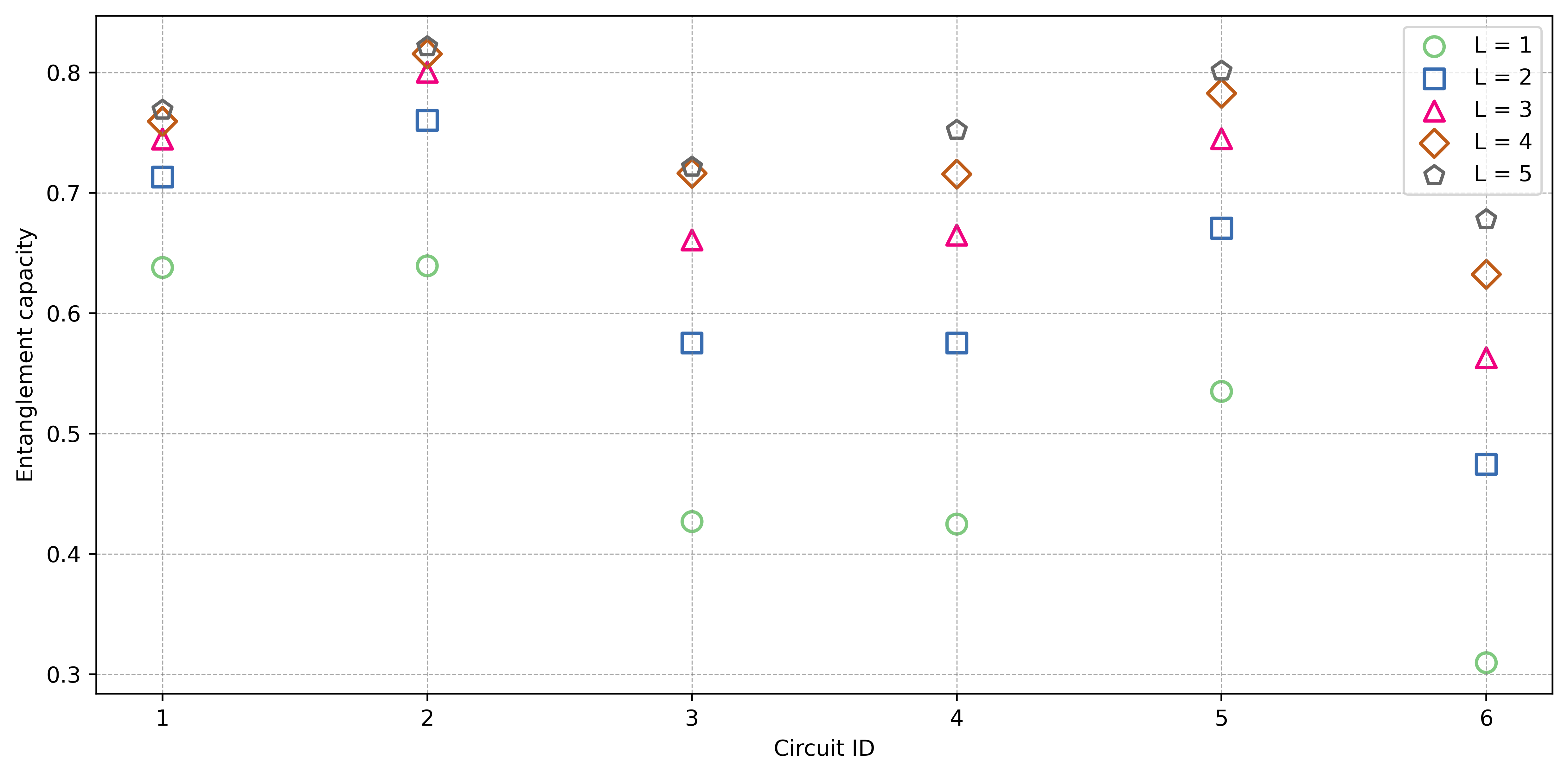}
    \caption{The average Meyer-Wallach Q-measure of the ansatz 1-6. Ansatz with higher Meyer-Wallach measure has stronger entangling capability.}
    \label{fig:ent}
\end{figure}

From the results shown in the two figures above, we have the following observations. 
First, when the number of control parameters per pulse exceeds two, ansatz~1 and~2 consistently exhibit higher expressibility and entangling capability than ansatz~3 and~4, which indicates that the simultaneous application of control pulses, rather than their sequential separation in time, substantially enhances both the expressibility and the entangling capability of the circuit. 
Second, a comparison between ansatz~1 and the gate-based ansatz~5 and~6 reveals distinct advantages of the analog Hamiltonian. In terms of expressibility, once the number of control parameters per pulse exceeds two, ansatz~1 surpasses ansatz~5 and~6 when they have the same number of layers. With respect to entangling capability, ansatz~1 is slightly weaker than ansatz~5 for the same evolution time. However, by extending the total evolution time, the entangling capability of ansatz~1 can exceed that of ansatz~5.
Finally, we compare ansatz~1 and ansatz~2 to examine the role of evolution time. In terms of expressibility, for $L = 3, 4,$ and $5$, the two ansatzes exhibit comparable performance despite their different total evolution times. In contrast, ansatz~2 consistently achieves a higher entangling capability than ansatz~1, indicating that longer evolution times can further enhance the ansatz ability to generate multipartite entanglement. To systematically characterize these trends, we further analyze the dependence of expressibility and entangling capability on both the number of control parameters per pulse and the total evolution time, with the results summarized in Fig.~\ref{fig:exp-ent}.

\begin{figure}[H]
    \centering
    \begin{minipage}{0.45\textwidth}
        \includegraphics[width=1\textwidth]{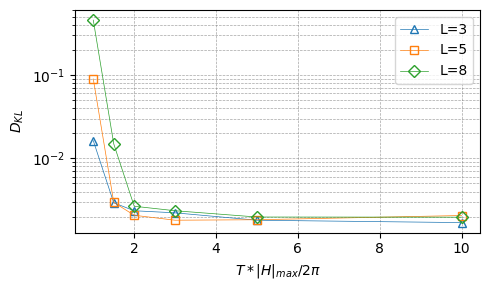}
    \end{minipage}
    \hfill
    \begin{minipage}{0.45\textwidth}
        \includegraphics[width=1\textwidth]{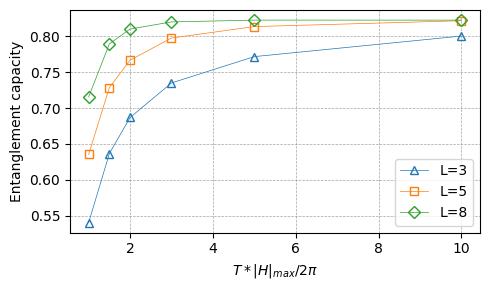}
    \end{minipage}
    \caption{The expressivity(left) and entanglement(right) capability of the ansatz 1 with different number of control parameters per pulse and total evolution time. L refer to the number of control parameters per pulse.$\|H\|_{max}$ refers to the maximum amplitude of the each pulses.}
    \label{fig:exp-ent}
\end{figure}

\section{Complexity and error analysis}
In this section, we analyze the computational complexity and error of the UAQS framework. We first examine the complexity of estimating the element of $M$ and $V$ in UAQS. We then characterize the dominant sources of error arising in UAQS, such as algorithm error and implementation error, and discuss how these errors propagate through the optimization process.
 
\subsection{complexity analysis} Here, we first discuss the computational resource estimation of the proposed UAQS. Let $m_{H_{\bm{\theta}}}$ be the number of terms of controllable Hamiltonian $H_c$, $m_H$ be the terms of problem Hamiltonian $H$, $N$ be the sampling number in Monte Carlo integration, $m_s$ be associated with quantum process corresponding to the shot number for estimating $p(M_{H_j}=\pm1)$. The complexity of estimating all the $M_{j,k}^R$ via Theorem~\ref{thm:M_R} scales with $\mathcal{O}(N^2 m_{H_{\bm{\theta}}}^2 m_s T^2)$.  Similarly, $V_{j}^I$ takes $\mathcal{O}(N m_{H}m_{H_{\bm{\theta}}}m_s T)$ time to estimate. Being different from calculating $V_j^I$ which involves projective measurement $\mathcal{M}_H$, all the $V_j^R$ can be directly estimated via parameter shift rule which will cost $\mathcal{O}(N m_H m_{H_{\bm{\theta}}}T)$. 

\subsection{error analysis}
To estimate the errors occurred in UAQS, we utilize the trace distance as the metric, i.e. $D(|\psi\rangle,|\psi'\rangle)=\sqrt{1 - |\langle\psi'|\psi\rangle|^2}$. The overall errors occurred during the simulation can be bounded by algorithm error $\mathcal{E}_a$ and device error $\mathcal{E}_i$, which is illustrated in Fig.~\ref{fig:error_illus}. Let $|\psi(\bm{\theta},t_N)\rangle$ and $|\phi_{t_N}\rangle$ be the trial states and true state at time $t_N$, respectively. Thus, we have
\begin{equation}
D(|\psi(\bm{\theta},t_N)\rangle,|\phi_{t_N}\rangle) \le \mathcal{E}_0 + \mathcal{E}_a + \mathcal{E}_d 
\end{equation}
where $\mathcal{E}_a$ and $\mathcal{E}_i$ are defined as
\begin{align}
\mathcal{E}_0 & = D(|\psi_0\rangle, |\phi_0\rangle)\\
\mathcal{E}_a &= \sum_{n}^{N}\mathcal{E}_a^{(n)} = \sum_{n=1}^{N} D(|\psi(\bm{\theta},t_n)\rangle,U_{n-1}|\psi(\bm{\theta},t_{n-1})\rangle)\\
\mathcal{E}_i &= \sum_{n}^{N}\mathcal{E}_i^{(n)} = \sum_{n=1}^{N} D(|\psi'(\bm{\theta},t_n)\rangle, |\psi(\bm{\theta},t_n)\rangle)
\end{align}
where $\sum_{n} t_n = T$. For ease the notation, we omit the parameters of trial state, denote $|\psi(\bm{\theta},t)\rangle = |\psi_t\rangle$ as the idea generated trial state, and define $|\psi_t'\rangle$ as the trial states implemented on real quantum device.

\begin{figure}[H]
    \centering
	\includegraphics[width=0.5\textwidth]{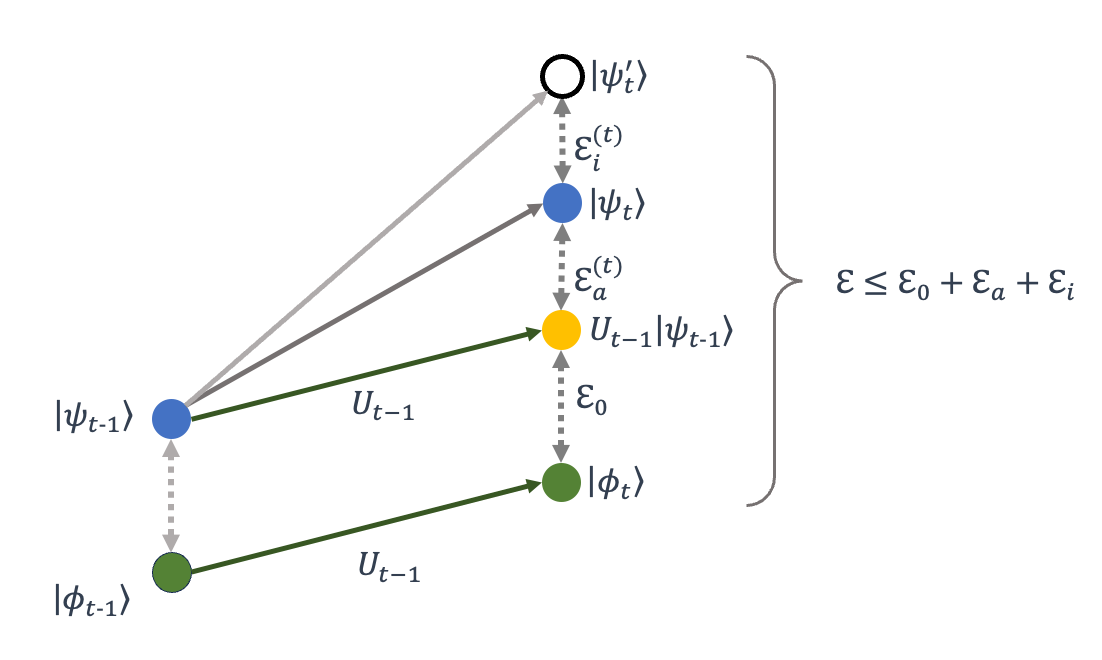}
\caption{The illustration of states at evolution time $t$. The green circles represents the state $|\phi_t\rangle$ which is under the exact evolution of hamiltonian $H$, the blue circles denotes ideal trial state  $|\psi_t\rangle$ which is generated by the proposed algorithm, the yellow circle refers to the trial state $|\psi_{t-1}\rangle$ evolved by the exact evolution $U_{t-1}$, and the white circle is the trial states prepared on real quantum device $|\psi_t'\rangle$.}
\label{fig:error_illus}
\end{figure}

\paragraph{algorithm error $\mathcal{E}_a$}
First, we consider the algorithm error in time step $t$, the algorithm error $\mathcal{E}_a$ is defined as the difference between parameter trial states prepared under the exact time evolution and the proposed algorithm. To analysis such error, we use Taylor expansion of exact evolution during $\delta t$, $U' = e^{-iH\delta t}$, and the parametrized analog Hamiltonian $U(\bm{\theta}(t))$, then we can get,
\begin{align}\label{eq:expan_u'}
U' &= e^{-iH\delta t} \\
&= \sum_{k=0}^\infty \frac{\delta t^k}{k!}(-i H)^k\\
& = \mathbb{I} + \delta t (-i H) + \frac{\delta t^2}{2!}(-i H )^2+\mathcal{O}(\delta t^3)
\end{align}
and
\begin{align}\label{eq:expan_u}
U(\bm{\theta}_t) &= U(\bm{\theta}_{t-1}+ \delta t \cdot \dot{\bm{\theta}}_{t-1})\\
& = \sum_{k=0}^\infty \frac{\delta t^k}{k!}\frac{\partial^k U(\bm{\theta}_{t-1})}{\partial t^k}\\
& = U(\bm{\theta}_{t-1})+\delta t  \frac{U(\bm{\theta}_{t-1})}{\partial t} + \frac{\delta t^2}{2!}\frac{\partial^2 U(\bm{\theta}_{t-1})}{\partial t^2}+\mathcal{O}(\delta t^3)
\end{align}

Here, the $\mathcal{E}_a$ during time step $t$ can be written as
\begin{align}
\mathcal{E}_a &= D(e^{-iH\delta t}|\psi_{t-1}\rangle, U(\bm{\theta}_t)|0\rangle)\\
& = \sqrt{1 - |\langle \psi_{t-1}| e^{iH\delta t}\cdot U(\bm{\theta}(t))|0\rangle|^2}
\end{align}
When we substitute Eq.(\ref{eq:expan_u}) and Eq.(\ref{eq:expan_u'}) into $\mathcal{E}_a$, we have
\begin{equation}
    \mathcal{E}_a = \sqrt{\Delta \delta t^2 + \mathcal{O}(\delta t^3)}
\end{equation}
as $\Re\left(\langle \psi_{t-1}|\frac{\partial^2 U(\bm{\theta}_{t-1})}{\partial t^2}|0\rangle\right)$ + $\langle 0|\frac{\partial U^\dagger(\bm{\theta}_{t-1})}{\partial t}\frac{\partial U(\bm{\theta}_{t-1})}{\partial t}|0\rangle = 0$ and $\Re\left(\langle \psi_{t-1}|\frac{\partial U(\bm{\theta}_{t-1})}{\partial t}|0\rangle\right) = 0$, we can get,
\begin{align}
    &\Delta = \langle \psi_{t-1}|H|\psi_{t-1}\rangle + \left|\langle \psi_{t-1}|H|\psi_{t-1}\rangle\right|^2 + \left|\langle \psi_{t-1}|\frac{\partial U(\bm{\theta}_{t-1})}{\partial t}|0\rangle \right|^2 \nonumber \\
    &+ \Re\left(\langle\psi_{t-1}|(-iH)\frac{\partial U(\bm{\theta}_{t-1})}{\partial t}|0\rangle\right) 
    + \Re\left(\langle \psi_{t-1}|\frac{\partial U(\bm{\theta}_{t-1})}{\partial t}|0\rangle \langle \psi_{t-1}|(-iH)|\psi_{t-1}\rangle\right) 
\end{align}
Thus, we can bound the norm of $\Delta$,
\begin{align}
    \|\Delta\| \leq \|H\|+ \|H\|^2 + \left\|\frac{\partial U(\bm{\theta}_{t-1})}{\partial t}\right\|^2+2\left\|H\right\|\left\|\frac{\partial U(\bm{\theta}_{t-1})}{\partial t}\right\|
\end{align}

\paragraph{implementation error $\mathcal{E}_i$}
The implementation error is mainly caused by the imperfection estimations of $M$ and $V$ affected by either device noise or quantum measurement. Thus, we denote $M = \hat{M} + \delta M$ where $M$ is accurate and $\tilde{M}$ is the estimation. Similarly, for $V$, we have $V = \tilde{V} + \delta V$. Therefore, the error of the derivative of parameters $\dot{\bm{\theta}}$ can be expressed as,
\begin{align}
    \delta \dot{\bm{\theta}} & = M^{-1}V - \hat{M}^{-1}\hat{V}\\
    & =  (\hat{M} + \delta M)^{-1}(\hat{V}+\delta V) - \hat{M}^{-1}\hat{V}\\
    & \approx \hat{M}^{-1}\delta V - \hat{M}^{-1}\delta M \hat{M}^{-1} V.
\end{align}
As the Ref. \cite{li2017efficient} has pointed out the implementation error in each time step $n$ can be written as
\begin{equation}
\mathcal{E}_i^{(n)} = \sqrt{\delta \dot{\bm{\theta}} A \delta \dot{\bm{\theta}} \delta t^2 + O(\delta t^3)},
\end{equation}
where $A$ is the  quantum geometric tensor,
\begin{equation}
A_{j,k} = \frac{\partial{\langle \psi(\bm{\theta}(t))|}}{\partial \theta_j}\frac{\partial{|{\langle \psi(\bm{\theta}(t))\rangle}}}{\partial \theta_k} -\frac{\partial{\langle \psi(\bm{\theta}(t))|}}{\partial \theta_j} |\psi(\bm{\theta}(t))\rangle \langle \psi(\bm{\theta}(t))| \frac{\partial{|{\langle \psi(\bm{\theta}(t))\rangle}}}{\partial \theta_k} 
\end{equation}
Therefore the total implementation error $\mathcal{E}_i$ can be bounded by
\begin{equation}
\mathcal{E}_i \leq \sqrt{\|A\|_{\max}}\|\Delta\|_{\max} T,
\end{equation}
where
\begin{equation}
\Delta = \|\hat{M}\|^{-1}\|\delta V\| + \| \hat{M}^{-1}\|^2 \|\delta M\| \|V\|
\end{equation}

\section{Supplementary numerical result}
In this supplementary section, we present additional numerical experiments that further demonstrate the applicability and robustness of the proposed universal analog quantum simulation (UAQS) framework. 
In particular, we focus on two representative and practically relevant applications. The first concerns adiabatic state preparation, where UAQS is employed to track the ground state of a time-dependent Hamiltonian and the second addresses ground-state preparation for molecular systems, highlighting the ability of UAQS to capture electronic structure properties in quantum chemistry models. Together, these case studies provide complementary evidence of the effectiveness of UAQS across both dynamical and static quantum simulation tasks, and underscore its versatility in addressing problems of interest in condensed matter physics and molecular quantum chemistry.
\subsection{Adiabatic state preparation}
Adiabatic state preparation (ASP) is a widely used approach for preparing the ground state of a complex Hamiltonian. The central idea is to initialize the quantum system in the ground state of a simple and easily implementable Hamiltonian, denoted as $H_{\operatorname{start}}$, and then gradually transform the system Hamiltonian into the target Hamiltonian  $H_{\operatorname{final}}$.
This process is realized through a time-dependent Hamiltonian $H(t)$ that interpolates between these two Hamiltonian,
\begin{equation}
H(t) = \left[1 - s(t)\right] H_{\text{start}} + s(t) H_{\text{final}},
\end{equation}
where $s(t) \in [0,1]$ is a monotonically increasing schedule, often chosen to be linear, such as $s(t) = t/T$, with total evolution time $T$.

In the framework, the evolution is discretized into small iteration steps. When the step size is sufficiently small, the Hamiltonian can be regarded as approximately constant within each iteration. To illustrate this procedure, we consider the following transverse-field Ising model,
\begin{equation}
H = \sum_{i=1}^{n} Z_i Z_{i+1} - \sum_{i=1}^{n} X_i.
\end{equation}
Specifically, we choose the initial Hamiltonian as $H_{\text{start}} = \sum_{i=1}^{n} X_i$, where $n=6$ and whose ground state $|\psi_0\rangle = |+\rangle^{\otimes n}$ is straightforward to prepare. Starting from this state, we apply the UAQS algorithm with a total evolution time $T = 5$ and a step size $\delta t=0.01$ to approximate the ground state of the target Hamiltonian. The resulting evolution of the energy and the fidelity with respect to the target ground state as functions of time are presented in Fig.~\ref{fig:asp}.

\begin{figure}[H]
    \centering
    \includegraphics[width=0.9\linewidth]{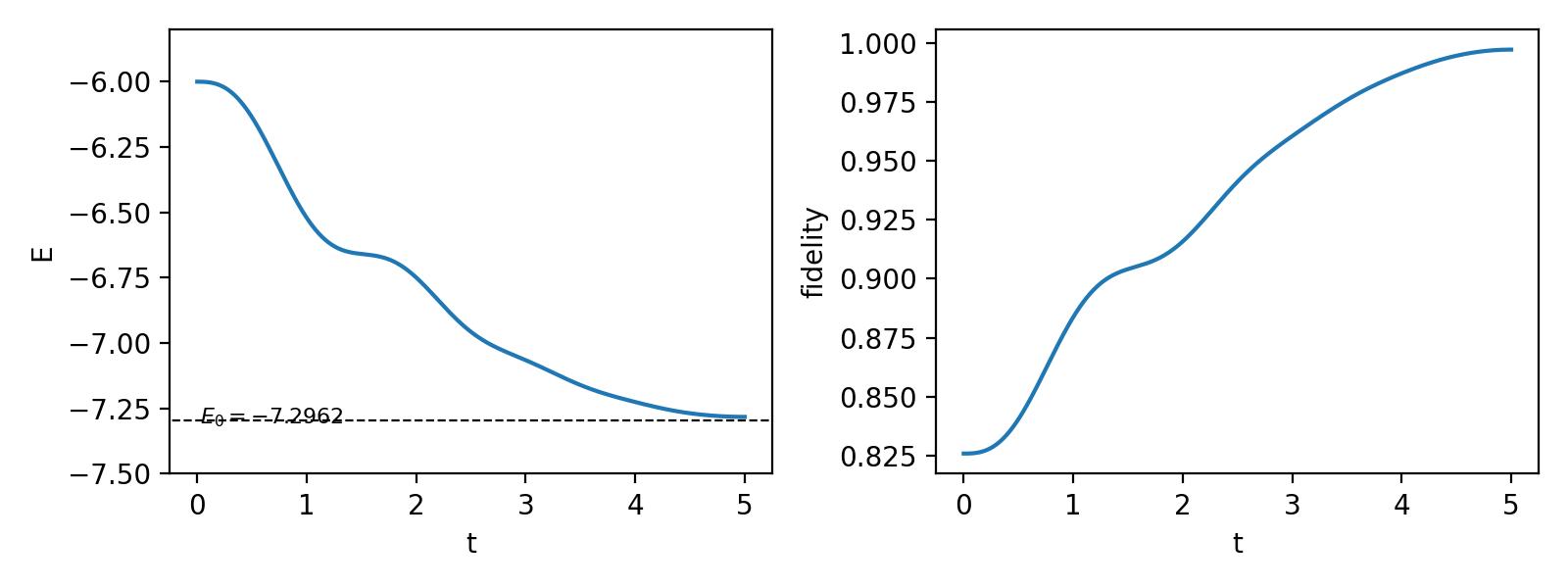}
    \caption{Numerical simulation result of adiabatic state preparation of the ground state of 1D transverse-field ising model via UAQS. The initial state is $|+\rangle^{\otimes 6}$. The left panel shows the evolution of the energy of the simulated state during the dynamics, while the right panel shows the evolution of the fidelity between the simulated state and the ground state. It can be seen that after the adiabatic evolution, the simulated quantum state has approached the target state.}
    \label{fig:asp}
\end{figure}

\subsection{Ground state preparation for molecular $H_2$}
The other application is ground state preparation via integrating imaginary-time evolution and UAQS. 
It is well-known that imaginary-time evolution provides a powerful and conceptually simple mechanism for preparing the ground state of a given Hamiltonian. By replacing real time $t$ with imaginary time $\tau = it$, the unitary Schrodinger evolution is transformed into a nonunitary process governed by
\begin{equation}
|\psi(\tau)\rangle = \frac{e^{-H \tau} |\psi(0)\rangle}{| e^{-H \tau} |\psi(0)\rangle |}.
\end{equation}
When the initial state has a nonzero overlap with the ground state of $H$, this evolution exponentially suppresses contributions from excited states according to their energy gaps, while amplifying the relative weight of the ground-state component. Concretely, any quantum state $|\psi\rangle$ can be expressed as a superposition of the orthogonal eigenstates of Hamiltonian $H$ of the system,
\begin{equation}
|\psi\rangle = \sum_j c_j|E_j\rangle,
\end{equation}
where $|E_j\rangle$ correspondes to the eigenstate with eigenvalue $E_j$. When the state $|\psi\rangle$ evolves in imaginary time evolution which governed by operator $e^{-H \tau}$, we have
\begin{equation}\label{eq:img}
|\psi_\tau\rangle = A_\tau e^{-H\tau}|\psi\rangle=A_\tau\sum_j c_je^{-E_j\tau}|E_j\rangle
\end{equation}
In Eq.~\eqref{eq:img}, each component of the state is multipied by the corresponding $e^{-Ej\tau}$. Since $E_0$ is the smallest eigenvalue, the term corresponding to the ground state $|E_0\rangle$ decayes the slowest as $\tau$ increases. Thus, when evolve the system in enough imaginary time, the higher energy components of state $|\psi\rangle$ are exponentially suppressed and leave the ground state as the dominant component. 
As a result, the state would quickly converges to the ground state of the Hamiltonian, up to normalization. This projection-like property makes imaginary-time evolution a natural and robust strategy for ground-state preparation, particularly in such frameworks where the nonunitary dynamics can be approximated through parameter optimization while remaining compatible with near-term quantum hardware.

Here, we apply such approach to prepare the ground state of Hydrogen molecule in the minimal STO-3G basis as an example. To approximate the ground state of $H_2$, we need to first transform the Hamiltonian of $H_2$ with Born-Oppenheimer approximation under second quantization,  and build a fermionic Hamiltonian in terms of creation and anihilation operators,
\begin{equation}
H = \sum_{i j} t_{i j} a_i^{\dagger} a_j+\frac{1}{2} \sum_{i j k l} V_{i j k l} a_i^{\dagger} a_j^{\dagger} a_l a_k
\end{equation}
where the $t_{i j}$ represent one-electron integrals which which account for the kinetic energy of the electrons and their interaction with the external potential,  and $V_{i j k l}$ describes the two-electron integrals, which correspond to the electron-electron Coulomb interaction. Then, we could perform Jordan-Wigner or Bravyi-Kitaev transformation to map the fermionic Hamiltonian which describes systems like electrons in atoms and molecules, to a qubit Hamiltonian that can be processed by quantum computers. Here, we also assume that the inital state has a non-zero overlap with the ground state. By using imaginary time evolution with parametrized analog anstaz, we are able to effectively project the initial state onto the ground state of the qubit Hamiltonian by suppressing contributions from excited states through exponential decay which allows for the accurate determination of the ground state of molecule $H_2$.
\begin{figure}[H]
    \centering
    \includegraphics[width=0.5\linewidth]{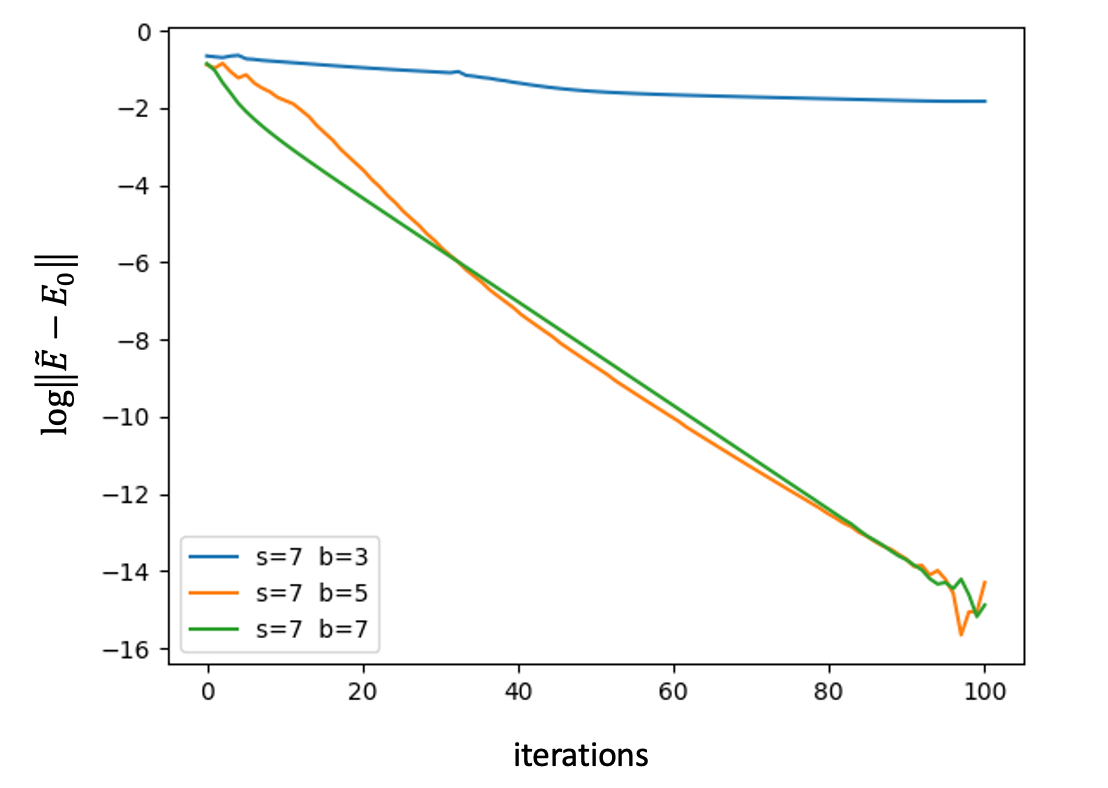}
    \caption{The performance of UAQS for estimating the ground energy of $H_2$. The total evolution time of UAQS is fixed to $s=7$, while $b$ denotes the number of independent control parameters assigned to each control Hamiltonian $H_j$. The y axis shows the logarithm of the error between the UAQS-estimated ground-state energy and the exact ground-state energy. Increasing $b$ can enhanced the expressivity of ansatz, which makes the convergence rate and final accuracy improved.}
    \label{fig:asp}
\end{figure}

\end{document}